\def\R{\mathbb{R}}
\def\E{\mathbb{E}}
\def\N{\mathbb{N}}
\def\qed{\vspace{2mm} \vrule height4pt width3pt depth2pt}
\newcommand{\prim}{$'$}
\newtheorem{definition}{Definition} 
\newtheorem{proposition}{Proposition}
\newtheorem{remark}{Remark}
\title{Dynamics and biological reference points in a stochastic age-structured fish population model with an illustration of the Patagonian toothfish population}
\author{{V. Riquelme$^{1}$\footnote{This work was achieved while the first author was at Departamento de Ingeniería Matemática and Centro de Modelamiento Matemático (UMI CNRS 2807), Universidad de Chile, Beauchef 851, Santiago, Chile}, T.J. Quinn II$^2$, H. Ram\'irez C.$^{3}$}\\[2mm]
$^{1}$ Departamento de Matem\'atica, Universidad T\'ecnica Federico Santa Mar\'ia,\\
Avenida Espa\~na 1680, Valpara\'iso, Chile.\\[2mm]
{\tt vriquelme@dim.uchile.cl}\\[2mm]
$^2$ Juneau Center, College of Fisheries and Ocean Sciences, \\
University of Alaska Fairbanks, 11120 Glacier Highway, Juneau, AK 99801-8677, US\\
{\tt terry.quinn@alaska.edu}\\[2mm]
$^{3}$ Departamento de Ingenier\'ia Matem\'atica \& Centro de Modelamiento
Matem\'atico\\ (UMI 2807, CNRS), Universidad de Chile, Beauchef 851, Casilla 170-3,
Santiago 3, Chile.\\[2mm]
{\tt hramirez@dim.uchile.cl}\\[2mm]
}
\date{\today}
\begin{document}

\maketitle

{\em Abstract.}

In this manuscript we investigate the long-term behavior of a single-species fishery, which is harvested by several fleets. The time evolution of this population is modeled by a discrete time stochastic age-structured model. We assume that incertitude only affects the recruitment. First, for the deterministic version of this model, we characterize the equilibrium yields in terms of the fishing mortality. Then, for the stochastic version, we introduce the concepts of maximum expected, log expected and harmonic expected sustainable yield, and we analyze how the incertitude affects the behavior of these yields and their stationary distribution. All the numerical simulations are performed with data obtained from Patagonian Tooth-fish fishery, which is harvested by four different type of fleets: Chilean Industrial fleet, Chilean Artisanal fleet, Argentinean longline fleet, and Argentinean Artisanal fleet.


\section{Introduction}

Age-structured fish population dynamics models are ubiquitous around the world for integrating the diverse sources of information available with key parameters for describing the factors affecting the dynamics, including natural and fishing mortality and reproduction, along with the fishing process \cite{quinn}. The basis of these models is a deterministic, linear Leslie matrix formulation \cite{caswell,getzhaight,horwood,leslie}. For biological realism, two modifications are useful: (1) nonlinear dynamics during the early life history stage to obtain equilibrium or sustainability, and (2) stochastic variation in early life survival to account for the large amount of uncertainty due to environmental conditions. 

Stochastic variability in fish population models due to recruitment has already been studied in \cite{brodziak,getz,getzfrancis,hightower,horwood,reed2}. \cite{horwood} study the sensitivity of a age-structured model with respect to noise in general form, by means of Fourier analysis. In \cite{reed2} a discrete time non-linear stochastic age-structured population model without plus group is studied, and equations for approximations of the first and second moments of each age-group are obtained, in order to obtain the steady state variances of the recruitment and yield. A similar analysis is carried out in \cite{getz}, with two main differences: first, the model considers a plus group; secondly, each year is splitted into two seasons: during \emph{harvesting season} the model is given by an ordinary differential equation (ODE), and during \emph{spawning season} the model is given by a discrete time equation. Approximate and explicit expressions for the first and second moments of each age-group are obtained. A more practical approach is carried out in \cite{getzfrancis}, where the authors perform Monte-Carlo simulations of a stochastic age-structured model with a Ricker spawner-recruitment function to estimate the long term mean yield, for three different fisheries, and conclude that the maximum expected sustainable yield in all the studied fisheries decreases as the coefficient of variation ($CV$) of recruitment increases (considering a range for $CV$ from 0 to 200\%.). \cite{brodziak} considers a model similar to \cite{reed2}, with the addition of a plus group, Baranov catches, and a Beverton-Holt spawner-recruitment function with multiplicative log-normal noise (without correction for bias), and performs Monte-Carlo simulations of the obtained model in order to generate short-term stochastic projections for different constant values of fishing mortalities given by biological reference points. Nevertheless, explicit expressions for the equilibrium distribution are not developed in any of the previously cited works.

Regarding risk measures, \cite{thompson} studies the effects of stochasticity in the optimization of harvesting control rules, and shows the convenience of considering the arithmetic, geometric, and harmonic mean yields as indicators of the state of the catch. According to \cite{thompson}, these quantities are related to the attitude towards risk from the point of view of an agent that needs to design fishing control policies. For example, a limit control rule might be defined by the decision-theoretic optimum derived under a risk-neutral stance, while a target control rule might be defined by the decision-theoretic optimum derived under a risk-averse stance. A simple way to characterize this difference is as follows: the risk-neutral solution maximizes the expectation of stationary yield, while the risk-averse solution maximizes the expectation of log-sustainable yield. Harmonic mean is also used as a part of a precautionary control value for biomass-based control rules used by the U.S. North Pacific Fishery Management Council \cite{NPFMC}. Due to the inherent nonlinearity of the models, it is not easy to compute these quantities, nor to obtain the stationary distribution associated to these models. This can be done in one-dimensional biomass models, such as in \cite{busquet} and \cite{ewald}, where the authors study respectively a discrete- and continuous-time Schaefer population model with multiplicative noise for the biomass, for specific probability distributions of the noise. The authors are able to derive explicit formulas for the stationary distribution, and prove that the expected sustainable yield decreases as the variance of the noise increases. The drawback of this formulation is that the information about the age-structure dissapears; the discussion in \cite{tahvonen} ilustrates the interest of considering an age-structured approach over a biomass approach.
\smallskip 

In this article we are interested in the study of the long term behavior of a single-species fishery harvested by several fleets and subject to environmental randomness that affects the recruitment. More specifically, we want to derive optimal constant fishing strategies that maximize the expected long term yield. To this purpose, we model the time evolution of the fishery by a discrete time non-linear stochastic age-structured population model with a plus group, where the recruitment is given by a Beverton-Holt spawner-recruitment function. We introduce the concepts of maximum expected stationary yield (MESY), maximum expected log-sustainable yield (MELSY), and maximum expected harmonic sustainable yield (MEHSY), and illustrate the case of the Patagonian toothfish population in Chilean and Argentinean fisheries, comparing the different reference points with the deterministic maximum sustainable yield (MSY) and among them. Via numerical simulations we show that the coefficient of variation has a negative effect on any of the maximum expected reference points, which advises to be more cautious if large levels of variability on recruitment are present in the fishery, and that the deterministic MSY cannot be attained in presence of environmental noise. These results confirm the theoretical results obtained in \cite{busquet} for biomass-based models.
\smallskip 

This article is organized as follows: In Section \ref{sec_def} we define the deterministic age-structured model and investigate properties of its equilibrium values depending on fishing mortality. In Section \ref{sec:stochasticModel} we introduce a stochastic term to the recruitment function, leading to a stochastic age-structured model, and we define the mentioned expected yield measures. In Section \ref{sec:numSim} we show the results of numerical simulations for both the deterministic and stochastic model, and we give estimates of these yield measures, showing the relevance of accounting for variability in the model.

\section{Materials and methods}
\subsection{Preliminaries and Deterministic Model}
\label{sec_def}

Consider a model of a fishery comprised of $n$ fleets $f=\{1,\dots,n\}$. The fish population contains individuals of ages $a\in\{1\,\dots,A-1,A\}$, in which $A$ is the age category of individuals of $A$ or more years, called a plus group. The age at which individuals reach $50\%$ maturity is at $a=r$, and the proportion of mature individuals at age $a$ is denoted $m_a$. 
%
%
The population within each age group is subject to a mortality that is composed by a natural mortality and the effect of fishing exerted by the fleets. We denote by $M_a$ the natural mortality rate by age $a$, and by $F_a$ the full recruitment fishing mortality rate by age $a$. This mortality rate is a linear function of the fishing effort $F$, that is, $F_a=\sum_{f=1}^nP_fs_{f,a}F$, where $P_f$ is the proportion of fishing mortality exerted by fleet $f$ and $s_{f,a}$ is the selectivity of fish of age $a$ by fleet $f$. Note that, by convention, $\max_a s_{f,a}=1$.
%
%
\begin{center}
\captionof{table}{Notation used in the age-structured model}
\begin{tabular}{||c|l||}\hline\hline
   $f$     & fleet type, $f\in\{1,\dots,n\}$ \\
   $a$     & age category, $a\in\{1,\dots,A-1,A\}$\\ 
   $M_a$   & natural mortality by age \\ 
   $P_f$   & proportion of fishing mortality of fleet $f$\\
   $F$     & full recruitment fishing mortality \\
$s_{f,a}$  & selectivity of fish of age $a$ by fleet $f$ \\
$F_a = F_a(F) = \sum_{f=1}^nP_fs_{f,a}F$ & fishing mortality at age $a$ \\
$Z_a= Z_a(F) = M_a + F_a(F)$ & total mortality rate for age $a$ \\
${\rm SSB}_t$ & spawning stock biomass at time $t$\\
$\tilde R_t$ & recruitment (at age $a=r$) at time $t$ \\
   $W_a$   & weight at age $a$ \\
   $m_a$   & maturity at age $a$ \\
   $R_0$   & virginal recruitment \\ \hline\hline
\end{tabular}
\end{center}

Total abundance $N$ is considered as the sum of all individuals of ages $r$ or more. The spawning stock biomass (SSB) at the year $t$ is composed by the individuals from the beginning of the year $t$ that have survived the proportion $\tau$ of the year until they spawn. 
\begin{equation}\label{eq:SSB}
{\rm SSB}_{t} = \sum_{a=r}^{A} m_aW_aN_{t,a}e^{-\tau Z_a}.
\end{equation}

Recruitment to the population occurs at age $r$ and is often the starting age at which data are collected. It depends on the SSB of $r$ years before. 
\begin{equation}\label{eq:Recruitment}
\tilde R_{t} \,=\, \varphi({\rm SSB}_{t-r}).
\end{equation}
The function $\varphi(\cdot)$ represents a spawner-recruit function, of which the asymptotic Beverton-Holt and the dome-shape Ricker are the most common. The Beverton-Holt equation is given by
\begin{equation}\label{eq:BHfunction}
\varphi({\rm SSB}) = \frac{\alpha {\rm SSB}}{\beta + {\rm SSB}}.
\end{equation}
(see, for instance, \cite{quinn}), where $\alpha = 4hR_0/(5h-1)$ and $\beta = B_0(1-h)/(5h-1)$ is the virginal biomass (unfished) and $h$ is the steepness of stock-recruit relationship. The individuals of age $a+1$ at a given year $t+1$ are the individuals of age $a$ that survived the year $t$. With the previous assumptions, the equations for the age-structured deterministic population model are
\begin{equation}\label{eq:modelBacalao}
\begin{split}
N_{t+1,1} = \varphi({\rm SSB}_t),\quad N_{t+1,a+1} &= N_{t,a}e^{-Z_a},\,\, a=1,\dots,A-2,\quad
N_{t+1,A} = N_{t,A-1}e^{-Z_{A-1}}+N_{t,A}e^{-Z_{A}}.
\end{split}
\end{equation}
The previous equations can be put in matrix form as
\begin{equation}\label{eq:matrix_form}
N_{t+1} = A(F)N_t + B\varphi({\rm SSB}_{t}),
\end{equation}
where $A(F)$ is an age-structured Leslie matrix in which the first row has null entries and the off-diagonals contain survival rates $\exp(-Z_a)$. In order to use this matrix formulation we place recruitment at $r$ years after their spawning; following \cite[Section~7.4]{quinn} we consider the natural mortalities of the first $r-1$ groups of juvenile individuals are null, this is, $M_1=\dots=M_{r-1}=0$, and the selectivities $s_{f,1}=\dots=s_{f,r-1}=0$ for all fleets $f=1,\dots,n$. To handle the plus group, it is further modified to take into account the survival rate of the individuals that belong to the age category $A$, or, 
\begin{equation*}
A(F)_{a+1,a} = e^{-Z_a(F)},\quad a=1,\dots,A-1;\qquad A(F)_{A,A} = e^{-Z_{A}(F)},
\end{equation*}
and $B=(1,0,\dots,0)^{\top}$. The yield is given by the Baranov catch in weight equation:
\begin{equation}\label{eq:yield}
Y_t = \sum_{a=r}^{A} W_a \frac{F_a}{Z_a}(1-e^{-Z_a})N_{t,a}.
\end{equation}

A key role in the analysis of population dynamics is the concept of equilibrium. For the standard Leslie matrix without including the plus group, these formulas were presented in \cite[Section~7.4]{quinn}. For this analysis, we consider constant mortality rates $M_a$, selectivities $s_{f,a}$, proportions of fishing mortalities $P_f$, and fishing mortality $F$. Including the plus group, the equilibrium values $N^{\star}=(N_1^{\star},\dots,N_{A}^{\star})^{\top}$ and ${\rm SSB}^{\star}$  can be computed in terms of $N^{\star}_1$ and the spawning potential ratio ${\rm SPR}^{\star}={\rm SPR}^{\star}(F)$, defined as
\begin{equation*}
{\rm SPR}^{\star} := \sum_{a=r}^{A-1}W_am_a\mathcal L_ae^{-\tau Z_a} + \frac{W_{A}m_{A}\mathcal L_{A}}{1-e^{-Z_{A}}}e^{-\tau Z_{A}},
\end{equation*}
where  $\mathcal L_a=\mathcal L_a(F)$ is the cumulative survival at age $a$ given by
\begin{equation*}
\mathcal L_a :=\exp\left\{-\sum_{x=1}^{a-1}Z_x\right\},\quad a=2,\dots,A,\quad\mathcal L_1:=1;\\
\end{equation*}

As in \cite{getz}, the population at equilibrium can be characterized by the equations
\begin{equation}\label{eq:equilib_N}
N_1^{\star} = \varphi(N_1^{\star}{\rm SPR}^{\star}),\qquad
N_{a}^{\star} = N_1^{\star}\mathcal L_a,\,\, a=2,\dots,A-1,\qquad
N_{A}^{\star} = N_{1}^{\star}\frac{\mathcal L_{A}}{1-e^{-Z_{A}}}.
\end{equation}
(a detailed deduction of previous formulas is stated in the Appendix \ref{app:1}). Notice that the spawning stock biomass ${\rm SSB}^{\star}$ and the abundances at equilibrium $N^{\star}$ depend on $N_1^{\star}$ as well as on the fishing mortality $F$. For the particular case of the Beverton-Holt spawner recruit function given in \eqref{eq:BHfunction}, we have 
\begin{equation}\label{eq:equilib_N1}
N_1^{\star} = \alpha-\frac{\beta}{{\rm SPR}^{\star}},\quad {\rm SSB}^{\star} = \alpha {\rm SPR}^{\star}-\beta.
\end{equation}
The yield at equilibrium has an explicit formula, given by
\begin{equation}\label{eq:equilib_Y}
Y^{\star} = Y^{\star}(F) = N_1^{\star}\left(\sum_{a=r}^{A-1} W_a \frac{F_a}{Z_a}(1-e^{-Z_a})\mathcal L_{a} + W_{A} \frac{F_{A}}{Z_{A}}\mathcal L_{A} \right).
\end{equation}

A condition for the equilibrium point to have ecological meaning is that ${\rm SPR^{\star}}\geq \frac{\beta}{\alpha}$. Also, \cite{reed} showed that a sufficient condition for stability of the equilibrium point is 
\begin{equation}
\left| {\rm SPR}^*\frac{\partial \varphi({\rm SSB}^{\star})}{\partial {\rm SSB}} \right|<1.
\end{equation}
(see \cite{quinn}). For the particular case of a Beverton-Holt spawner-recruit function of the form given in \eqref{eq:BHfunction}, the previous condition is exactly equivalent to ${\rm SPR^{\star}}> \frac{\beta}{\alpha}$, this is, the condition of positivity of the equilibrium ${\rm SSB^{\star}}$ (or, equivalently, positivity of $N_1^{\star}$).

\smallskip

\begin{definition}
We define the maximum sustainable yield $Y^{\star}_{\rm MSY}$ as the maximum yield at equilibrium (as a function of the fishing mortality $F$):
\begin{equation}
Y^{\star}_{\rm MSY} := \max_{F\geq0}\, Y^{\star}(F),
\end{equation}
where $Y^{\star}(F)$ is given in \eqref{eq:equilib_Y}. We denote $F_{\rm MSY}$ as the fishing mortality that produces the maximum sustainable yield, that is, $Y^{\star}_{\rm MSY}=Y^{\star}(F_{\rm MSY})$.
\end{definition}

According to Proposition \ref{prop:existencia} in Appendix \ref{app:1}, if ${\rm SPR^{\star}}(F=0)>\frac{\beta}{\alpha}$, then there exists a value $F_{\rm ext}>0$ such that for every $F>F_{\rm ext}$ the theoretical equilibrium $Y^{\star}(F)$ is non-positive, meaning that the fishery reaches its extinction. Then, the fishing mortality $F_{\rm MSY}$ that achieves the maximum sustainable yield $Y^{\star}_{\rm MSY}$ belongs to the interval $[0, F_{\rm ext}]$.


\subsection{Stochastic Model and Optimal Yield Measures}
\label{sec:stochasticModel}


Fluctuations on fish populations naturally appear as effect of environmental variations such as temperature, food availability, or reproductive success \cite{noise}. As in \cite{getz,hightower,reed}, we consider that the variability enters at the level of the stock-recruitment relationship by introducing a modification in the recruitment function given by a log-normal random variable at each time (see \cite{hightower} for the discussion about this particular choice of noise). The generalized model is
\begin{equation}\label{eq:stochModel}
N_{t+1} = A(F)N_t + B\varphi({\rm SSB}_{t})\omega_{t}e^{-\frac{1}{2}CV^2},
\end{equation}
where $(\omega_{t})_{t\in\N}$ is a sequence of independent and identically distributed lognormal random variables with parameters $\mu=0$ and $\sigma=CV$ (the coefficient of variation of the lognormal distribution) and independent of $N_0$. The term $\exp(-1/2CV^2)$ corrects for bias.

For each time $t\geq0$ and constant fishing effort $F\geq0$, the yield $Y_t(F)$ (given by \eqref{eq:yield}) is a random variable. As in the previous section, we aim to study the concept of maximum sustainable yield, but since yield is now a random variable it is necessary to analyze its probability distribution, and some notions of mean value. 
From now on, denote $Y^{\star}_F$ as the stationary yield, that is, the yield under its stationary behavior, whose probability distribution (that we denote $f_{Y^{\star}_F}(\cdot|F)$) depends on the mortality $F$ (that we assume to be constant). Following \cite{thompson}, we introduce the following definitions:

\if{
\begin{definition}
Define $Y^{\star}_{\rm ESY}=Y^{\star}_{\rm ESY}(F)$ as the expected sustainable yield at stationarity:
\begin{equation}
Y^{\star}_{\rm ESY}(F) := \E(Y^*_F) = \int_{y=0}^{\infty} yf_{Y^{\star}_F}(y|F)dy.
\end{equation}
We define the maximum expected sustainable yield $Y^{\star}_{\rm MESY}$ as
\begin{equation}
Y^{\star}_{\rm MESY} = \max_{F\geq0}Y^{\star}_{\rm ESY}(F).
\end{equation}
We denote by $F_{\rm MESY}$ the fishing mortality that attains the maximum expected sustainable yield $Y^{\star}_{\rm MESY}$, that is, the value $F$ for which $Y^{\star}_{\rm MESY}=Y^{\star}_{\rm ESY}(F)$.
\end{definition}

\begin{definition}
Define $Y^{\star}_{\rm ELSY}=Y^{\star}_{\rm ELSY}(F)$ as the expected log-sustainable yield at stationarity:
\begin{equation}
Y^{\star}_{\rm ELSY}(F) := \exp\left\{\E(\log(Y^*_F))\right\} = \exp\left\{\int_{y=0}^{\infty} \log(y)f_{Y^{\star}_F}(y|F)dy\right\}
\end{equation}
We define the maximum expected log-sustainable yield $Y^{\star}_{\rm MELSY}$ as
\begin{equation}
Y^{\star}_{\rm MELSY} = \max_{F\geq0}Y^{\star}_{\rm ELSY}(F).
\end{equation}
We denote by $F_{\rm MELSY}$ the fishing mortality that attains the maximum expected log-sustainable yield $Y^{\star}_{\rm MELSY}$, that is, the value $F$ for which $Y^{\star}_{\rm MELSY}=Y^{\star}_{\rm ELSY}(F)$.
\end{definition}

\begin{definition}
Define $Y^{\star}_{\rm EHSY}=Y^{\star}_{\rm EHSY}(F)$ as the expected harmonic sustainable yield at stationarity:
\begin{equation}
Y^{\star}_{\rm EHSY}(F) := \E((Y^*_F)^{-1})^{-1} = \left(\int_{y=0}^{\infty} y^{-1}f_{Y^{\star}_F}(y|F)dy\right)^{-1}.
\end{equation}
We define the maximum expected harmonic sustainable yield $Y^{\star}_{\rm MEHSY}$ as
\begin{equation}
Y^{\star}_{\rm MEHSY} = \max_{F\geq0}Y^{\star}_{\rm EHSY}(F).
\end{equation}
We denote by $F_{\rm MEHSY}$ the fishing mortality that attains the maximum expected harmonic sustainable yield $Y^{\star}_{\rm MEHSY}$, that is, the value $F$ for which $Y^{\star}_{\rm MEHSY}=Y^{\star}_{\rm EHSY}(F)$.
\end{definition}
}\fi

\begin{definition}
Define $Y^{\star}_{\rm ESY}(F)$ (resp. $Y^{\star}_{\rm ELSY}(F)$, $Y^{\star}_{\rm EHSY}(F)$) as the expected sustainable (resp. log-sustainable, harmonic sustainable) yield at stationarity:
\begin{equation}
\begin{split}
Y^{\star}_{\rm ESY}(F) &:= \E(Y^*_F) = \int_{y=0}^{\infty} yf_{Y^{\star}_F}(y|F)dy,\\
Y^{\star}_{\rm ELSY}(F) &:= \exp\left\{\E(\log(Y^*_F))\right\} = \exp\left\{\int_{y=0}^{\infty} \log(y)f_{Y^{\star}_F}(y|F)dy\right\},\\
Y^{\star}_{\rm EHSY}(F) &:= \E((Y^*_F)^{-1})^{-1} = \left(\int_{y=0}^{\infty} y^{-1}f_{Y^{\star}_F}(y|F)dy\right)^{-1}.
\end{split}
\end{equation}

We define the maximum expected sustainable (resp. log-sustainable, harmonic sustainable) yield $Y^{\star}_{\rm MESY}$ (resp. $Y^{\star}_{\rm MELSY}$, $Y^{\star}_{\rm MEHSY}$) as
\begin{equation}
\begin{split}
Y^{\star}_{\rm MESY} := \max_{F\geq0}\,Y^{\star}_{\rm ESY}(F),\quad Y^{\star}_{\rm MELSY} := \max_{F\geq0}\,Y^{\star}_{\rm ELSY}(F),\quad Y^{\star}_{\rm MEHSY} := \max_{F\geq0}\,Y^{\star}_{\rm EHSY}(F).
\end{split}
\end{equation}

We denote by $F_{\rm MESY}$ (resp. $F_{\rm MELSY}$, $F_{\rm MEHSY}$) the fishing mortality that attains the maximum expected sustainable (resp. log-sustainable, harmonic sustainable) yield $Y^{\star}_{\rm MESY}$ (resp. $Y^{\star}_{\rm MELSY}$, $Y^{\star}_{\rm MEHSY}$).
\end{definition}

\begin{remark}
For any fixed $F\geq0$, we have the inequality $Y^{\star}_{\rm EHSY}(F)\leq Y^{\star}_{\rm ELSY}(F)\leq Y^{\star}_{\rm ESY}(F)$. Indeed, both inequalities are a consequence of Jensen's inequality \cite[Lemma~2.5]{kallenberg}. This shows that the sustainable yield measures previously defined express a different degree of neutrality or aversion to risk. Consequently, $Y^{\star}_{\rm MEHSY}\leq Y^{\star}_{\rm MELSY}\leq Y^{\star}_{\rm MESY}$.
\end{remark}

{\color{blue}{



\if{

\section{Simplified model and effect of stochasticity}

Consider a simplified version of the model consisting of only one age category and that fish spawn during their first year of life, and then they leave the population. This situation is modeled by the equation
\begin{equation}
N_{t+1} = \varphi(e^{-Z}N_t) U_t,
\end{equation}
where $Z=M+F$ is the mortality rate, $\varphi(\cdot)$ is the Beverton-Holt spawner-recruitment funcion $\varphi(s)=\alpha s/(\beta + s)$, and $U_t=\exp\{ \sigma W_t -\sigma^2/2 \}$, where $(W_t)_{t\in\N}$ is a sequence of independent standard normal variables and $\sigma>0$ represents the variability of the recruitment. We propose the following proposition:
\begin{proposition}
For any initial probability distribution of $N_0$ and $t\geq1$, the function $\sigma\mapsto \E_{N_0}\left[ N_t \right]$ is decreasing.
\end{proposition}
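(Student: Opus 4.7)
The plan is to prove the stronger statement that for every concave non-decreasing $\psi\colon[0,\infty)\to\R$ and every $t\geq 1$, the map $\sigma\mapsto \E_{N_0}[\psi(N_t)]$ is non-increasing, and then recover the proposition by specializing to $\psi(x)=x$. The argument is an induction on $t$ resting on two ingredients: the concavity of the Beverton--Holt function $\varphi$, and a convex-order comparison between the multiplicative log-normal shocks $U_t^\sigma := \exp(\sigma W_t - \sigma^2/2)$ at different values of $\sigma$.

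The key auxiliary fact is that, for every $0\leq\sigma_1\leq\sigma_2$ and every concave $\psi$, one has $\E[\psi(U^{\sigma_1})]\geq \E[\psi(U^{\sigma_2})]$. This is a classical consequence of the identity in law $U^\sigma \stackrel{d}{=} M_{\sigma^2}$, where $M_s=\exp(B_s-s/2)$ is the exponential martingale of a standard Brownian motion: conditional Jensen applied to the martingale $M$ between times $\sigma_1^2$ and $\sigma_2^2$ gives $\E[\phi(M_{\sigma_2^2})\mid \mathcal F_{\sigma_1^2}]\geq \phi(M_{\sigma_1^2})$ for convex $\phi$, and hence the reverse inequality for concave test functions. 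The bias correction $e^{-\sigma^2/2}$ is precisely what ensures $\E U^\sigma = 1$ for every $\sigma$, so the comparison is between distributions with identical means.

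Fix now a concave non-decreasing $\psi$ and, for each $\sigma\geq 0$, set
\begin{equation*}
g_\psi^\sigma(n) \;:=\; \E\!\left[\psi\!\bigl(\varphi(e^{-Z}n)\,U^\sigma\bigr)\right],\qquad n\geq 0.
\end{equation*}
I would first observe that $g_\psi^\sigma$ inherits the concave non-decreasing property in $n$: indeed $\varphi$ is itself concave and non-decreasing on $[0,\infty)$ (checked from $\varphi'(s)=\alpha\beta/(\beta+s)^2>0$ and $\varphi''(s)<0$), so the map $n\mapsto \psi(\varphi(e^{-Z}n)u)$ is concave non-decreasing for every $u\geq 0$, and expectation preserves these properties. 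Second, for each fixed $n\geq 0$ the map $u\mapsto \psi(\varphi(e^{-Z}n)u)$ is concave in $u$ (concave composed with a linear nonnegative scaling), so the auxiliary fact above yields the pointwise domination $g_\psi^{\sigma_1}(n)\geq g_\psi^{\sigma_2}(n)$ whenever $\sigma_1\leq\sigma_2$.

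The induction closes by conditioning on $N_t^\sigma$ and exploiting the independence of $U_t^\sigma$:
\begin{equation*}
\E_{N_0}[\psi(N_{t+1}^{\sigma_1})] = \E_{N_0}[g_\psi^{\sigma_1}(N_t^{\sigma_1})] \;\geq\; \E_{N_0}[g_\psi^{\sigma_2}(N_t^{\sigma_1})] \;\geq\; \E_{N_0}[g_\psi^{\sigma_2}(N_t^{\sigma_2})] = \E_{N_0}[\psi(N_{t+1}^{\sigma_2})],
\end{equation*}
where the first inequality uses the pointwise domination $g_\psi^{\sigma_1}\geq g_\psi^{\sigma_2}$, and the second applies the inductive hypothesis to the concave non-decreasing test function $g_\psi^{\sigma_2}$. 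The base case $t=1$ is handled directly: conditioning on $N_0$ reduces it to the $u$-concavity argument applied to the deterministic constant $c=\varphi(e^{-Z}N_0)$. Taking $\psi$ equal to the identity then yields the monotonicity of $\sigma\mapsto\E_{N_0}[N_t]$. The main conceptual obstacle is precisely the need to strengthen the inductive claim from $\psi=\mathrm{id}$ to the whole class of concave non-decreasing test functions; without this enlargement the statement is not stable under the nonlinear stochastic recursion, since one must absorb the concavity of $\varphi$ at every step into a concave test function for the next.
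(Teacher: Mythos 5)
Your proof is correct, and it takes a genuinely different route from the paper's. The paper argues computationally: it writes down a closed-form solution of the recursion, observes that $\E[N_1]=\E[\alpha N_0/(\beta e^Z+N_0)]$ is in fact \emph{constant} in $\sigma$ (so the proposition can only be read as ``non-increasing,'' exactly as you state it), reduces $\E[N_2]$ by conditioning to an expression of the form $\E^\sigma[\alpha U/(\tilde\beta+U)]$, and then shows this is decreasing in $\sigma$ by differentiating under the integral sign and symmetrizing the integrand about $\omega=\sigma$, where the sign of $(e^{-\sigma\omega}-e^{\sigma\omega})\omega$ settles the matter. As written, that argument handles only $t\le 2$ and does not obviously propagate to general $t$, since the closed formula for $N_t$ involves nested rational functions of the products $\prod_k U_k$ to which the single-variable symmetrization does not directly apply. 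Your argument --- the convex-order comparison $U^{\sigma_1}\le_{\mathrm{cx}}U^{\sigma_2}$ via the exponential martingale (valid precisely because the bias correction fixes $\E U^\sigma=1$), combined with an induction over the class of concave non-decreasing test functions, which is stable under the one-step kernel because $\varphi$ is concave increasing --- closes the induction uniformly in $t$ and is the cleaner and more complete proof. What the paper's computation buys in exchange is strictness of the inequality for the single-shock functional at $t=2$, whereas your convex-order argument yields only the non-strict monotonicity; but since the claim is already non-strict at $t=1$, that is the correct form of the statement anyway.
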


\begin{proof}
We start by providing a closed formula for the recurrence:
\begin{equation}
N_t = \frac{\alpha^t\prod_{j=0}^{t-1}U_j}{(\beta e^Z)^t+N_0\sum_{j=0}^{t-1}(\beta e^Z)^{t-1-j}\alpha^{j}\prod_{k=0}^{j-1}U_k }N_0,
\end{equation}
(it is possible to prove it by induction). The expectation for $t=1$ is 
\begin{equation*}
\begin{split}
\E\left[ N_1 \right] &= \E\left[\frac{\alpha N_0 U_0}{\beta e^Z+N_0}  \right]\\
&= \E\left[\E\left[\frac{\alpha N_0 }{\beta e^Z+N_0} U_0 \big| N_0 \right]\right]\\
&= \E\left[ \frac{\alpha N_0 }{\beta e^Z+N_0} \E\left[ U_0 \right]\right]\\
&= \E\left[ \frac{\alpha N_0 }{\beta e^Z+N_0} \right]\\
\end{split}
\end{equation*}
which is constant for $\sigma$. For $t=2$ we have
\begin{equation*}
\begin{split}
\E\left[ N_2 \right] &= \E\left[ \frac{\alpha^2 N_0U_0U_1 }{(\beta e^Z)^2 + (\beta e^Z) + \alpha N_0U_0}  \right]\\
&= \E\left[\E\left[  \frac{\alpha^2 N_0U_0U_1 }{(\beta e^Z)^2 + (\beta e^Z)N_0 + \alpha N_0U_0}  \big| N_0,\,U_0 \right]\right]\\
&= \E\left[ \frac{\alpha^2 N_0U_0 }{(\beta e^Z)^2 + (\beta e^Z)N_0 + \alpha N_0U_0}  \E\left[ U_1 \right] \right]\\
&= \E\left[ \frac{\alpha^2 N_0U_0 }{(\beta e^Z)^2 + (\beta e^Z)N_0 + \alpha N_0U_0}  \right]\\
&= \E\left[ \frac{\alpha U_0 }{(\beta e^Z)^2/(\alpha N_0) + (\beta e^Z)/\alpha + U_0}  \right]\\
&= \E\left[ \E\left[ \frac{\alpha U_0 }{(\beta e^Z)^2/(\alpha N_0) + (\beta e^Z)/\alpha + U_0} \big| N_0\right] \right]\\
&= \E\left[ \E\left[ \frac{\alpha U_0 }{\tilde \beta(N_0) + U_0} \big| N_0\right] \right]\\
\end{split}
\end{equation*}

We now study, for fixed $\alpha,\beta>0$, the expectation of the random variable $T:=\alpha U/(\beta+U)$ depending of $\sigma$, where $U=\exp\{\sigma W-\sigma^2/2\}$ and $W\sim N(0,1)$. More precisely, in what follows, we prove that the function $g(\sigma):=\E^{\sigma}\left[T\right]$ is decreasing. If we define $\phi(\omega):=\frac{1}{\sqrt{2\pi}}\exp\{-\omega^2/2\}$ as the standard Normal density function, we have
\begin{equation}
\begin{split}
g(\sigma) \,=\, \E^{\sigma}\left[T\right] &= \E^{\sigma}\left[\frac{\alpha U}{\beta+U}\right] \\
&= \E^{\sigma}\left[\frac{\alpha e^{\sigma W-\sigma^2/2}}{\beta+e^{\sigma W-\sigma^2/2}}\right] \\
&= \E^{\sigma}\left[\frac{\alpha e^{\sigma W}}{\beta e^{\sigma^2/2}+e^{\sigma W}}\right] \\
&= \int_{\R}\frac{\alpha e^{\sigma\omega}}{\beta e^{\sigma^2/2}+e^{\sigma\omega}}\phi(\omega)d\omega \\
\end{split}
\end{equation}

Consider the integrand function 
\begin{equation*}
h(\omega,\sigma) := \frac{\alpha e^{\sigma\omega}}{\beta e^{\sigma^2/2}+e^{\sigma\omega}}\phi(\omega),
\end{equation*}
which is continuous and differentiable, and its partial derivative with respect to $\sigma$ is
\begin{equation*}
\frac{\partial h}{\partial \sigma}(\omega,\sigma) = \frac{\alpha e^{\sigma\omega}\beta e^{\sigma^2/2}}{(\beta e^{\sigma^2/2}+e^{\sigma\omega})^2}(\omega-\sigma)\phi(\omega), 
\end{equation*}
which is a continuous function, and for $\sigma_{\rm max}>0$ large enough, it is uniformly bounded by an integrable function. Indeed,
\begin{equation*}
\left|\frac{\partial h}{\partial\sigma}(\omega,\sigma)\right| \leq \alpha(|\omega|+\sigma_{\rm max})\phi(\omega),
\end{equation*}
Then, for $\sigma\in[0,\sigma_{\rm max}]$, we have
\begin{equation*}
\begin{split}
\frac{\partial}{\partial\sigma}\E^{\sigma}\left[T\right] &= \int_{\R} \frac{\partial h}{\partial \sigma}(\omega,\sigma) d\omega \,=\, \int_{\R}\frac{\alpha e^{\sigma\omega}\beta e^{\sigma^2/2}}{(\beta e^{\sigma^2/2}+e^{\sigma\omega})^2}(\omega-\sigma)\phi(\omega) d\omega, 
\end{split}
\end{equation*}

Now we consider, for fixed $\sigma\geq0$, the function $\rho_\sigma(\omega)=\frac{\partial h}{\partial\sigma}(\omega,\sigma)$. The unique zero of this function is $\omega^{\star}:=\sigma$; for $\omega<\sigma$ we have $\rho_{\sigma}(\omega)<0$ and for $\omega>\sigma$ we have $\rho_{\sigma}(\omega)>0$. Now, 
\begin{equation*}
\begin{split}
\frac{\partial}{\partial\sigma}\E^{\sigma}\left[T\right] &= \int_{-\infty}^{\sigma} \rho_{\sigma}(\omega)d\omega + \int_{\sigma}^{\infty} \rho_{\sigma}(\omega)d\omega\\
&= \int_{-\infty}^{0} \rho_{\sigma}(\omega+\sigma)d\omega + \int_{0}^{\infty} \rho_{\sigma}(\omega+\sigma)d\omega\\
&= \int_{-\infty}^{0} \rho_{\sigma}(\sigma+\omega)d\omega + \int_{-\infty}^{0} \rho_{\sigma}(\sigma-\omega)d\omega\\
&= \int_{-\infty}^{0} \left[ \rho_{\sigma}(\sigma+\omega) + \rho_{\sigma}(\sigma-\omega) \right]d\omega\\
\end{split}
\end{equation*}
where, for $\sigma<0$, 
\begin{equation*}
\displaystyle\begin{split}
\rho_{\sigma}(\sigma+\omega) + \rho_{\sigma}(\sigma-\omega) &= \alpha\beta e^{\sigma^2/2} \left[\frac{e^{\sigma(\sigma+\omega)}((\sigma+\omega)-\sigma)}{(\beta e^{\sigma^2/2}+e^{\sigma(\sigma+\omega)})^2}\phi(\sigma+\omega)
 + \frac{ e^{\sigma(\sigma-\omega)}((\sigma-\omega)-\sigma)}{(\beta e^{\sigma^2/2}+e^{\sigma(\sigma-\omega)})^2}\phi(\sigma-\omega)\right]\\
&= \frac{\alpha\beta e^{\sigma^2/2}e^{\sigma^2}}{\sqrt{2\pi}}\omega \left[\frac{e^{\sigma\omega}e^{-\frac{1}{2}(\sigma+\omega)^2}}{(\beta e^{\sigma^2/2}+e^{\sigma(\sigma+\omega)})^2}
 - \frac{ e^{-\sigma\omega}e^{-\frac{1}{2}(\sigma-\omega)^2}}{(\beta e^{\sigma^2/2}+e^{\sigma(\sigma-\omega)})^2}\right]\\
&= \frac{\alpha\beta e^{\sigma^2/2}e^{\sigma^2}}{\sqrt{2\pi}}\omega \left[\frac{e^{\sigma\omega}e^{-\frac{\sigma^2}{2}-\frac{\omega^2}{2}-\sigma\omega}}{(\beta e^{\sigma^2/2}+e^{\sigma(\sigma+\omega)})^2}
 - \frac{ e^{-\sigma\omega}e^{-\frac{\sigma^2}{2}-\frac{\omega^2}{2}+\sigma\omega}}{(\beta e^{\sigma^2/2}+e^{\sigma(\sigma-\omega)})^2}\right]\\
&= \frac{\alpha\beta e^{\sigma^2/2}e^{\sigma^2}}{\sqrt{2\pi}}\omega e^{-\frac{\sigma^2}{2}-\frac{\omega^2}{2}} \left[\frac{1}{(\beta e^{\sigma^2/2}+e^{\sigma(\sigma+\omega)})^2}
 - \frac{1}{(\beta e^{\sigma^2/2}+e^{\sigma(\sigma-\omega)})^2}\right]\\
&= \frac{\alpha\beta e^{\sigma^2}}{\sqrt{2\pi}}\omega e^{-\frac{\omega^2}{2}} \frac{(\beta e^{\sigma^2/2}+e^{\sigma(\sigma-\omega)})^2 - (\beta e^{\sigma^2/2}+e^{\sigma(\sigma+\omega)})^2}{(\beta e^{\sigma^2/2}+e^{\sigma(\sigma+\omega)})^2(\beta e^{\sigma^2/2}+e^{\sigma(\sigma-\omega)})^2} \\
&= \frac{\alpha\beta e^{\sigma^2}}{\sqrt{2\pi}}\omega e^{-\frac{\omega^2}{2}} \frac{(e^{\sigma(\sigma-\omega)} - e^{\sigma(\sigma+\omega)})( 2\beta e^{\sigma^2/2} + e^{\sigma(\sigma-\omega)}+e^{\sigma(\sigma+\omega)})}{(\beta e^{\sigma^2/2}+e^{\sigma(\sigma+\omega)})^2(\beta e^{\sigma^2/2}+e^{\sigma(\sigma-\omega)})^2} \\
&= \frac{\alpha\beta e^{\sigma^2}}{\sqrt{2\pi}}\omega e^{-\frac{\omega^2}{2}}e^{\sigma^2} \frac{(e^{-\sigma\omega} - e^{\sigma\omega})( 2\beta e^{\sigma^2/2} + e^{\sigma(\sigma-\omega)}+e^{\sigma(\sigma+\omega)})}{(\beta e^{\sigma^2/2}+e^{\sigma(\sigma+\omega)})^2(\beta e^{\sigma^2/2}+e^{\sigma(\sigma-\omega)})^2} \\
&= \frac{\alpha\beta e^{\sigma^2}}{\sqrt{2\pi}} e^{-\frac{\omega^2}{2}}e^{\sigma^2} \frac{( 2\beta e^{\sigma^2/2} + e^{\sigma(\sigma-\omega)}+e^{\sigma(\sigma+\omega)})}{(\beta e^{\sigma^2/2}+e^{\sigma(\sigma+\omega)})^2(\beta e^{\sigma^2/2}+e^{\sigma(\sigma-\omega)})^2} (e^{-\sigma\omega} - e^{\sigma\omega})\omega.
\end{split}
\end{equation*}
In this last expression, all the terms are positive except for $(e^{-\sigma\omega} - e^{\sigma\omega})\omega$. Since in the integral we consider $\omega\leq0$, this last term is negative, obtaining that $\frac{\partial}{\partial\sigma}\E^{\sigma}T<0$ and concluding that $g(\cdot)$ is decreasing.

\end{proof}

\medskip

}\fi

}}

\section{Illustration: Patagonian toothfish}
\label{sec:numSim}

Patagonian toothfish (\emph{dissostichus eleginoides}) is a fish species that lives in the southern Pacific and Atlantic oceans (Figure \ref{fig:patagonia}). It was first researched as a potential fishery resource in Chile in the 1950s by exploratory fishings at low depths \cite{guerrero,moreno}. Thirty years later, this resource began to be caught as bycatch in trawl fisheries developed around Kerguelen island, Patagonic platform and the South Georgia Islands. In the mid 1980s, development of longlines that could be operated at low depths led to fishing in Chilean waters directed towards adult individuals. Fishing activity quickly expanded to the Patagonic platform, South Georgia, and Kerguelen. The high price of this product in the international market led to large increases in catch and the exploration of new fishing grounds. According to FAO, and including only legal catches, landings in CCAMLR and territorial waters, catches increased from less than 5.000 tons in 1983 to over 40.000 tons en 1992 \cite{bacalao}.

In Argentina, the patagonian toothfish fishery followed a similar development, starting as bycatch in trawl fisheries, and subsequently developing longline fisheries in Argentina as well as in Falklands. The Argentinean fishery started in the 1990s and reached its peak in 1995 with 19.225 tons; since then, catches have decreased. Longline fishing started in the Falklands as an experimental fishery in 1992 and became established in 1994 \cite{laptikhovsky}. The catch reached a maximum of 2.733 tons in 1994 and then it become stable in the range of 1.200 - 1.800 tons.

\begin{center}
\includegraphics[scale=0.35]{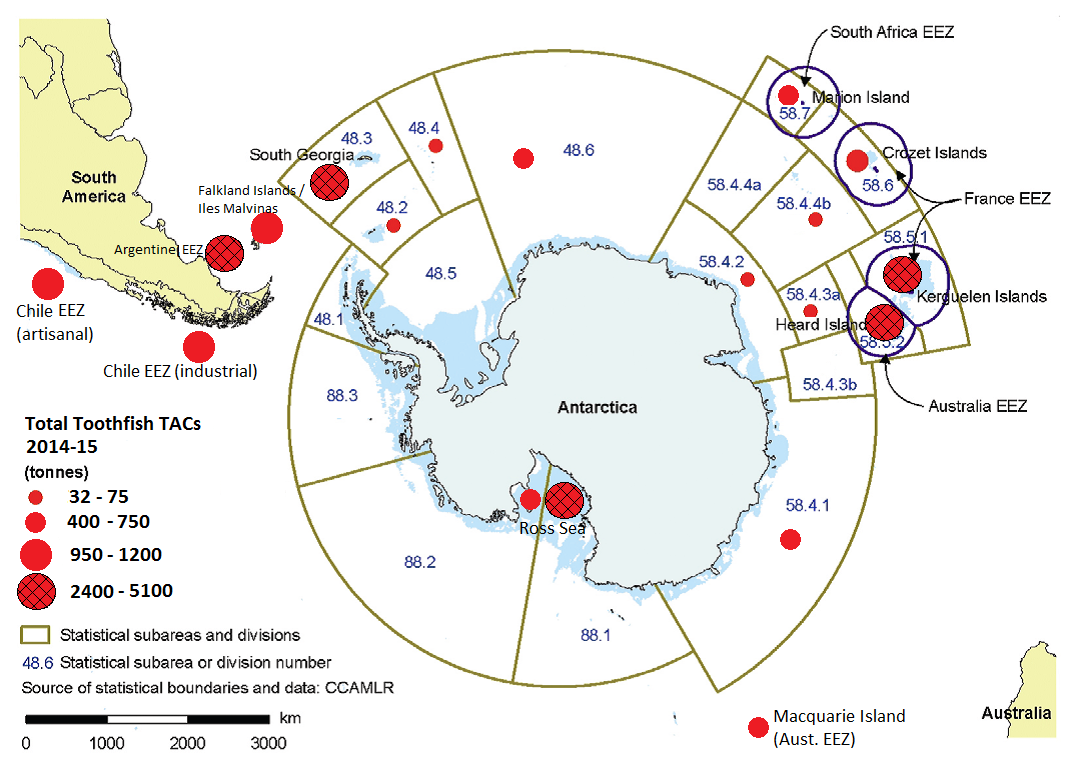}
\captionof{figure}{Distribution of patagonian toothfish and Antarctic toothfish in the southern Pacific and Atlantic Oceans. Source: ccamlr.org}\label{fig:patagonia}
\end{center}

On the basis in recomendations of Instituto Nacional de Investigaci\'on y Desarrollo Pesquero de Argentina (INIDEP), since the year 2000 the size of hooks is regulated, catches are documented, a minimum size limit is in place, and there are minimum depths of operation and a protection area for young individuals.

The Chilean patagonian toothfish fishery is divided mainly in two zones: the north zone, between the northern limit of the country (18\textdegree 21\prim) and parallel 47\textdegree, is reserved exclusively for the artisanal fleet. In the south zone (47\textdegree S - 57\textdegree S), the industrial fleet mainly operates.

The Argentinean patagonian toothfish fishery is comprised of two fleets distinguished by their rigging/fishing gear used: the longline fleet started operating in 1990 and since its inception, has been a directed fishery with an area of operation involving almost the entire ranges of the resource in the Argentina platform. The longline fishery is responsible for the largest historic landing registered in 1995, from which catches were significantly reduced by this fleet. The number of ships that make up the longline fleet has been on a gradual decline from a peak of 25 in 1996 to 4 in 2013 \cite{wohler}. 
The fleet operating with bottom trawling began in the late 1980s. Because of the differential size distribution with depth exhibited by toothfish, and because most of the trawl sets are made between 400 and 500 [m] deep, the trawl fleet mainly impacts the juvenile fraction of the population. The catches of trawlers showed an increasing trend from 1999, which is related to the exploration of new fishing areas and not to an increase in the resource abundance. Currently this fleet is composed of 5 trawlers \cite{wohler}.\smallskip

In Chile, in the year 2013, the new General Law of Fisheries and Aquaculture (Ley General de Pesca y Acuicultura, LGPA) took effect. Regarding fisheries, the modifications to the law covered five fundamental aspects: sustainability, industrial and artisanal fishery regulations, research, and audit. One of the main aspects will be to keep or to rebuild the fishery to the maximum sustainable yield (MSY), considering the biological characteristics of the exploited resources. It is important to have good estimate of MSY, which can be used as a target or limit for the harvest control policy.\smallskip

We model the Patagonian toothfish population as an age-structured population with a plus group at age $A=30$ years, and the age of recruitment $r=3$ years. Spawning occurs at $\tau=7/12$ of the way through the year. There are four fleets: Chilean Industrial fleet, Chilean Artisanal fleet, Argentinean longline fleet, and Argentinean Artisanal fleet. First, we investigate the estimation of model parameters and outputs, including the MSY for a deterministic system, that is, considering no effect of stochasticity, and then we investigate how the introduction of stochasticity on recruitment affects the behavior of the system by estimating the values of MESY, MELSY, and MEHSY.
\smallskip

For the numerical simulations, we consider data of the landings from 1978 to 2014 and the parameters estimated from stock assesment obtained from IFOP webpage (more details in \cite{bacalao}). The parameter values, obtained from \cite{bacalao}, are $M=0.15$, $h = 0.6$, $R_0 = 5309$, $B_0 = 214009$, which give as a result $\alpha = 6370.8$, $\beta = 42801.8$. Also, the proportions of fishing mortality of the different fleets are $P = (50.23\%,\, 23.57\%,\, 16.44\%,\, 9.76\%)$, so that the Chilean industrial fleet currently bytakes a majority of the fish, followed by Chilean artisanal fleet, Argentinean longline fleet, and Argentinean artisanal fleet. Selectivities are shown in Table \ref{table:app1} and initial conditions are shown in Table \ref{table:app2} in the Appendix \ref{app:2}.\smallskip

We consider a time horizon $T_{\rm end}=200$ years (time in which we can observe a steady behavior for the deterministic trajectories). As a first step, we run simulations of the deterministic process \eqref{eq:modelBacalao}. We define a meshgrid in the interval $[0,1]$ with a step $h_d=10^{-5}$, and for each value $F_j$ in this meshgrid we compute the theoretical equilibrium values $N^{\star}(F_j)$, ${\rm SSB}^{\star}(F_j)$ and $Y^{\star}(F)$ as given in equations \eqref{eq:equilib_N}, \eqref{eq:equilib_N1}, and \eqref{eq:equilib_Y}. The obtained value $F=F_{\rm MSY}$ that maximizes the sustainable yield is $F_{\rm MSY} = 0.13$ and the corresponding maximum yield is $Y^{\star}_{\rm MSY}=7324 [{\rm tons}]$ (see Figure \ref{fig:01}). The simulated values match the theoretical values.

\begin{center}
\begin{minipage}{8cm}
\begin{center}
\vspace{4mm}
\includegraphics[scale=0.23]{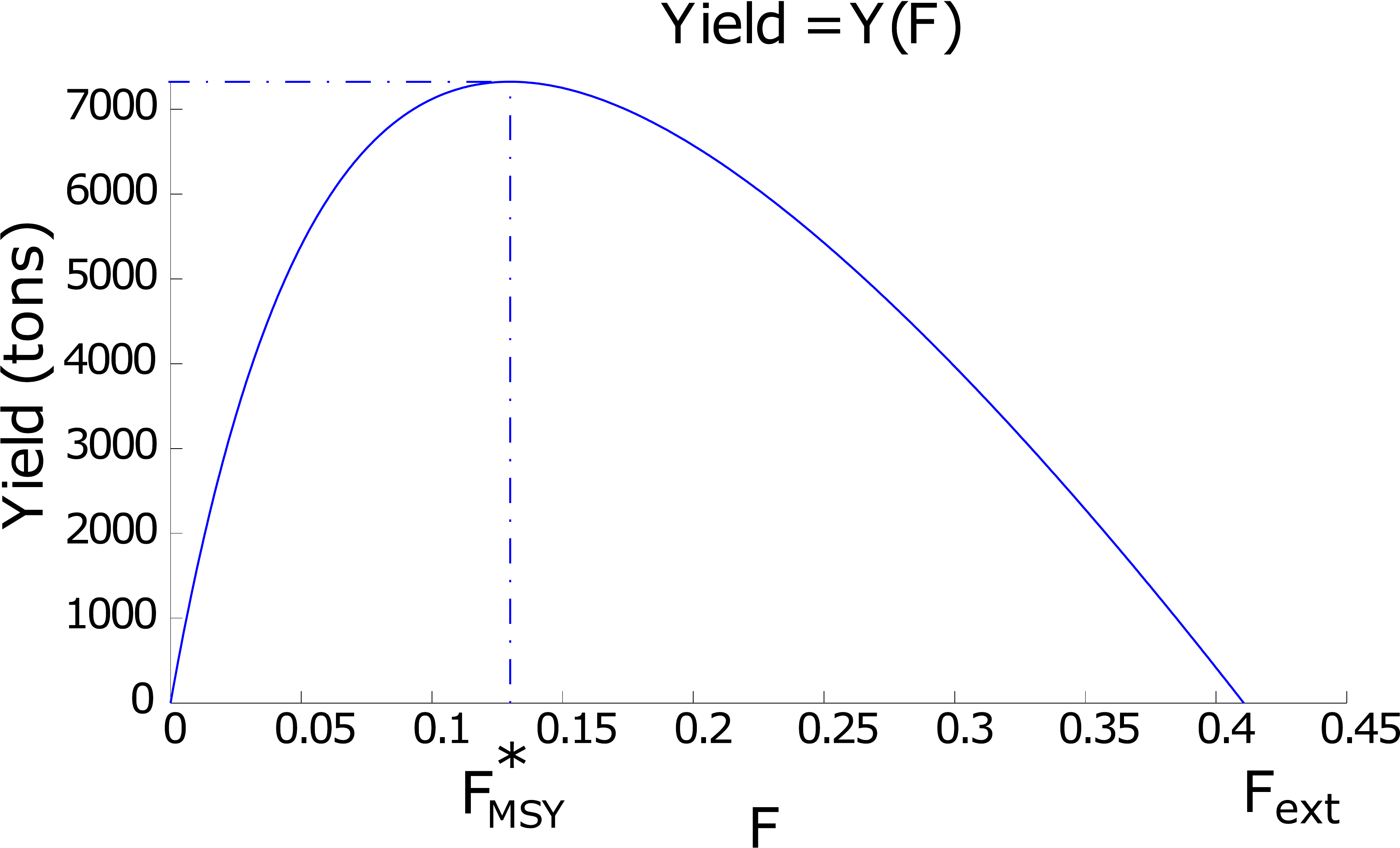}
\captionof{figure}{Deterministic yield as function of fishing mortality.}\label{fig:01}
\end{center}
\end{minipage}
\ \
\begin{minipage}{8cm}
\begin{center}
\includegraphics[scale=0.23]{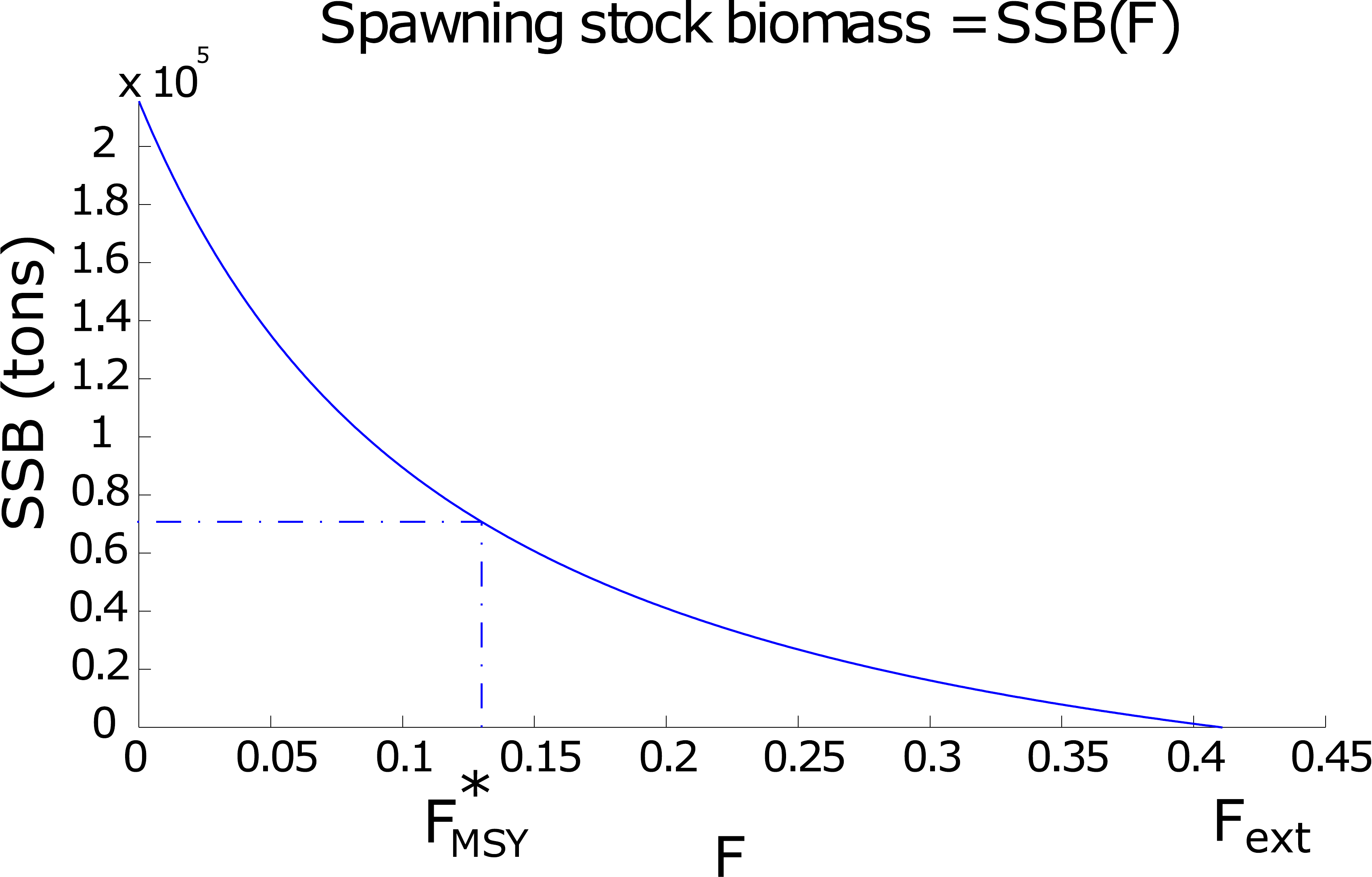}
\captionof{figure}{Deterministic SSB as function of fishing mortality.}\label{fig:02}
\end{center}
\end{minipage}
\end{center}

We can see that the yield reaches 0 at about $F_{\rm ext}=0.41$. This coincides with the fact that for $F>0.41$ the relation ${\rm SPR}(F)>\frac{\beta}{\alpha}$ is no longer satisfied, and then the corresponding theoretical equilibrium points become negative (as well as the yield and SSB). This relations are shown in Figures \ref{fig:01}-\ref{fig:03}.
\begin{center}
\includegraphics[scale=0.30]{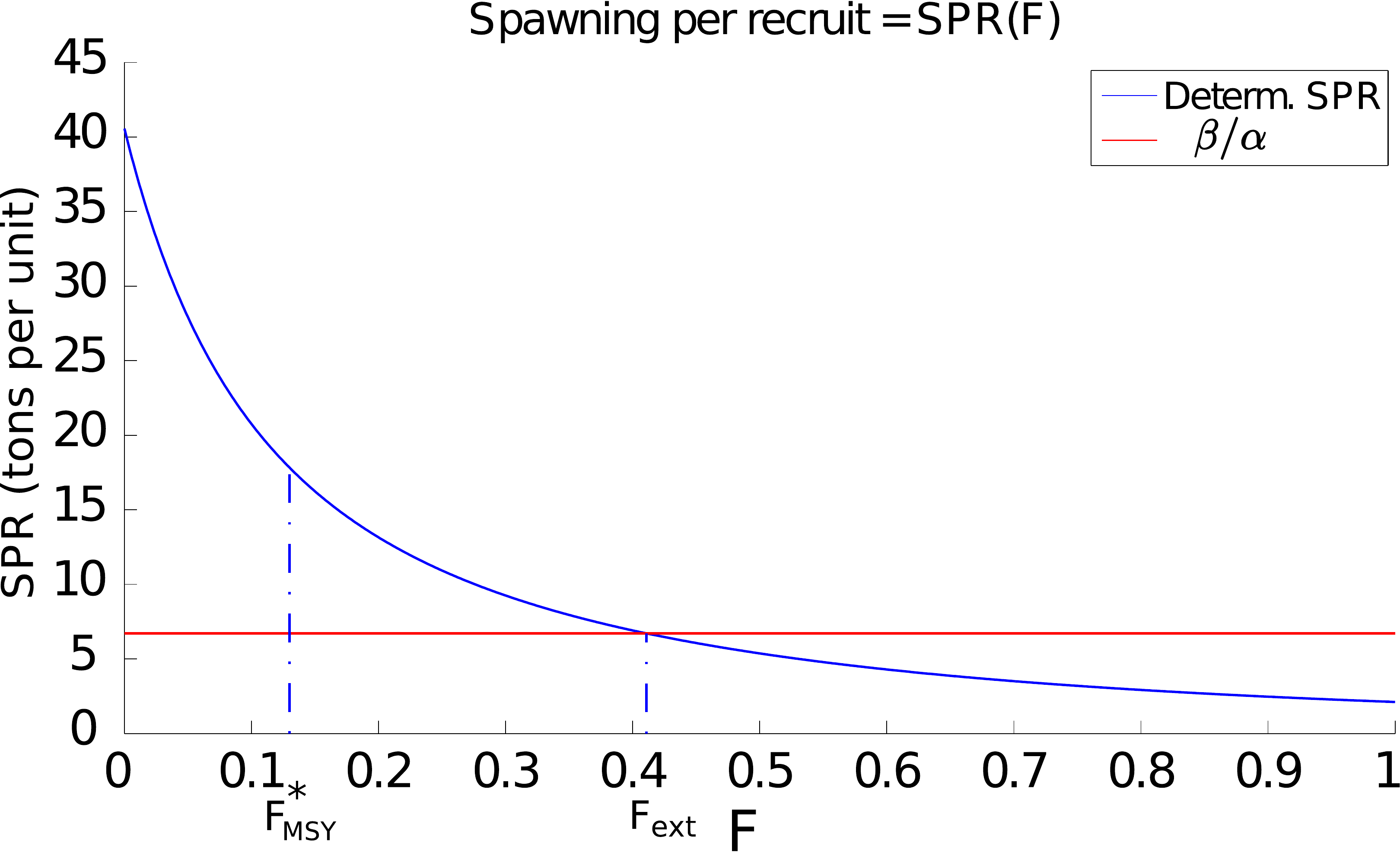}
\captionof{figure}{Deterministic SPR as function of fishing mortality.}\label{fig:03}
\end{center}
\medskip

The expected values for the process with stochasticity in recruitment must be determined by numerical simulations. In order to find $F_{\rm MESY}$, $F_{\rm MELSY}$, and $F_{\rm MEHSY}$, 
we search the optimal values for $F$ in the interval $[0,F_{\rm ext}]$ by defining a meshgrid, and for each value $F_j$ in the meshgrid we compute $\nu=500.000$ replications for different values of $CV$, namely $CV_1 = 0.25$, $CV_2=0.5$, $CV_3 = 0.75$, $CV_4 = 1$, $CV_5 = 1.5$ and $CV_6=2$. This number of replications is large enough to ensure that the confidence interval at 95\% for $Y_{\rm MESY}$ is at most of large 1\% with respect to the mean value.

The values of yield at the final time for each replication $k$ (for fixed $CV$ and $F_j$) are denoted $y_{T_{\rm end}}^{k}$. We estimate the values $Y^{\star}_{\rm ESY}(F)$, $Y^{\star}_{\rm ELSY}(F)$ and $Y^{\star}_{\rm EHSY}(F)$ by the arithmetic, geometric, and harmonic means of the final values of yield:
\begin{equation*}
\hat Y^{\star}_{\rm ESY}(F) = \frac{1}{\nu}\sum_{k=1}^{\nu} y^{k}_{T_{\rm end}},\quad
\hat Y^{\star}_{\rm ELSY}(F) = \exp\left\{\frac{1}{\nu}\sum_{k=1}^{\nu} \log\left(y^{k}_{T_{\rm end}}\right) \right\} ,\quad
\hat Y^{\star}_{\rm EHSY}(F) = \left(\frac{1}{\nu}\sum_{k=1}^{\nu} \frac{1}{y^{k}_{T_{\rm end}}} \right)^{-1}.
\end{equation*}

We search the value $\hat Y^{\star}_{\rm MESY}$ among the different values $\hat Y^{\star}_{\rm ESY}(F_j)$ (we proceed analogously for $\hat Y^{\star}_{\rm MELSY}$ and $\hat Y^{\star}_{\rm MEHSY}$). The results are shown in Table \ref{table:01}.\smallskip

\begin{center}
\captionof{table}{Values of maximum expected sustainable fishing effort, maximum expected sustainable yield, and spawning stock biomass (in tons) for different values of $CV$.}\label{table:01}
\begin{tabular}{||c||ccccccc||} \hline \hline
                  &  CV = 0   & CV = 0.25  & CV = 0.5  &  CV = 0.75 &   CV = 1   &    CV=1.5   &   CV = 2   \\ \hline \hline
$F_{\rm MESY}$                &   0.130   &  0.130  &  0.129   &   0.127  &  0.123  &   0.111  &  0.091   \\
$F_{\rm MELSY}$               &   0.130   &  0.130  &  0.128   &   0.125  &  0.120  &   0.103  &  0.078    \\
$F_{\rm MEHSY}$               &   0.130   &  0.129  &  0.127   &   0.123  &  0.117  &   0.097  &  0.068    \\ \hline
$\hat Y^{\star}_{\rm MESY}$   &    7324   & 7310    &  7264    &   7173   &   7011  &   6373   &   5141  \\ 
$\hat Y^{\star}_{\rm MELSY}$  &    7324   & 7295    &  7194    &   6997   &   6659  &   5399   &   3409  \\ 
$\hat Y^{\star}_{\rm MEHSY}$  &    7324   & 7278    &  7125    &   6829   &   6339  &   4677   &   2458  \\  \hline
$\rm SSB_{MESY}$              &   70775   & 70624   &  70717   &   70939  &  71601  &  72111   &  70944 \\ 
$\rm SSB_{MELSY}$             &   70775   & 70624   &  71271   &   72073  &  73369  &  77489   &  81772 \\ 
$\rm SSB_{MEHSY}$             &   70775   & 71173   &  71830   &   73225  &  75182  &  81787   &  91126 \\ \hline \hline
\end{tabular}
\end{center}                                                           
\medskip

The maximum expected yield (in any of its possible measures) decreases as the variability of fish recruitment increases. Also, the fishing mortality that produces the maximum expected sustainable yield decreases, which can be taken as a sign to be more cautious when there is stochasticity in recruitment present (see Table \ref{table:01}, as well as Figure \ref{fig:101}). Also, for small values of $CV$ the values $F_{\rm MESY}$, $F_{\rm MELSY}$, and $F_{\rm MEHSY}$ are close, but for large values of $CV$ they become clearly different. The most conservative measure is $F_{\rm MEHSY}$, and the least is $F_{\rm MESY}$, as expected from deterministic theory. Nevertheless, this behavior is not witnessed in the SSB at the optimal fishing mortalities, but it is possible to conclude that under a highly cautious behavior ($F_{\rm MEHSY}$), the expected SSB is by far larger than the deterministic SSB under high variability (for $CV=2$ it is a $28,5\%$ larger).
\smallskip

In Figure \ref{fig:001} we show the behavior of the estimators of the expected sustainable yield $\hat Y_{\rm ESY}(F)$ as function of $F$, for different values of $CV$. We can see that for each fixed $F$ the estimated expected sustainable yield values are decreasing with respect to $CV$, and extinction of biomass occurs for a larger range of fishing mortality than in the deterministic case. The same type of behavior can be observed for the estimator of the mean value of SSB, as Figure \ref{fig:002} shows:
\begin{figure}[h!]
\begin{center}
\begin{minipage}{8cm}
\begin{center}
\includegraphics[scale=0.23]{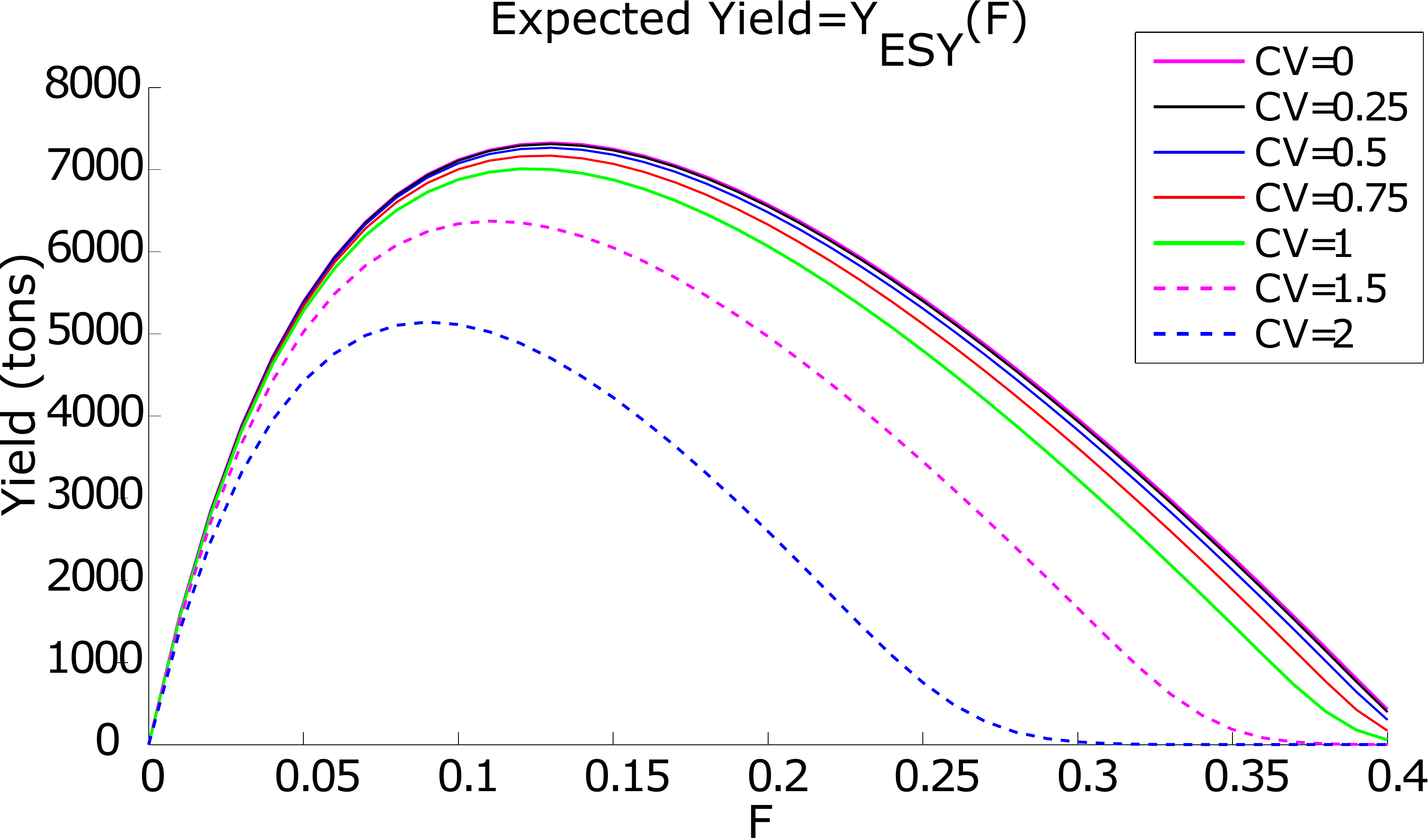}
\captionof{figure}{Yield as function of fishing mortality, for different values of $CV$.}\label{fig:001}
\end{center}
\end{minipage}
\ \
\begin{minipage}{8cm}
\begin{center}
\includegraphics[scale=0.23]{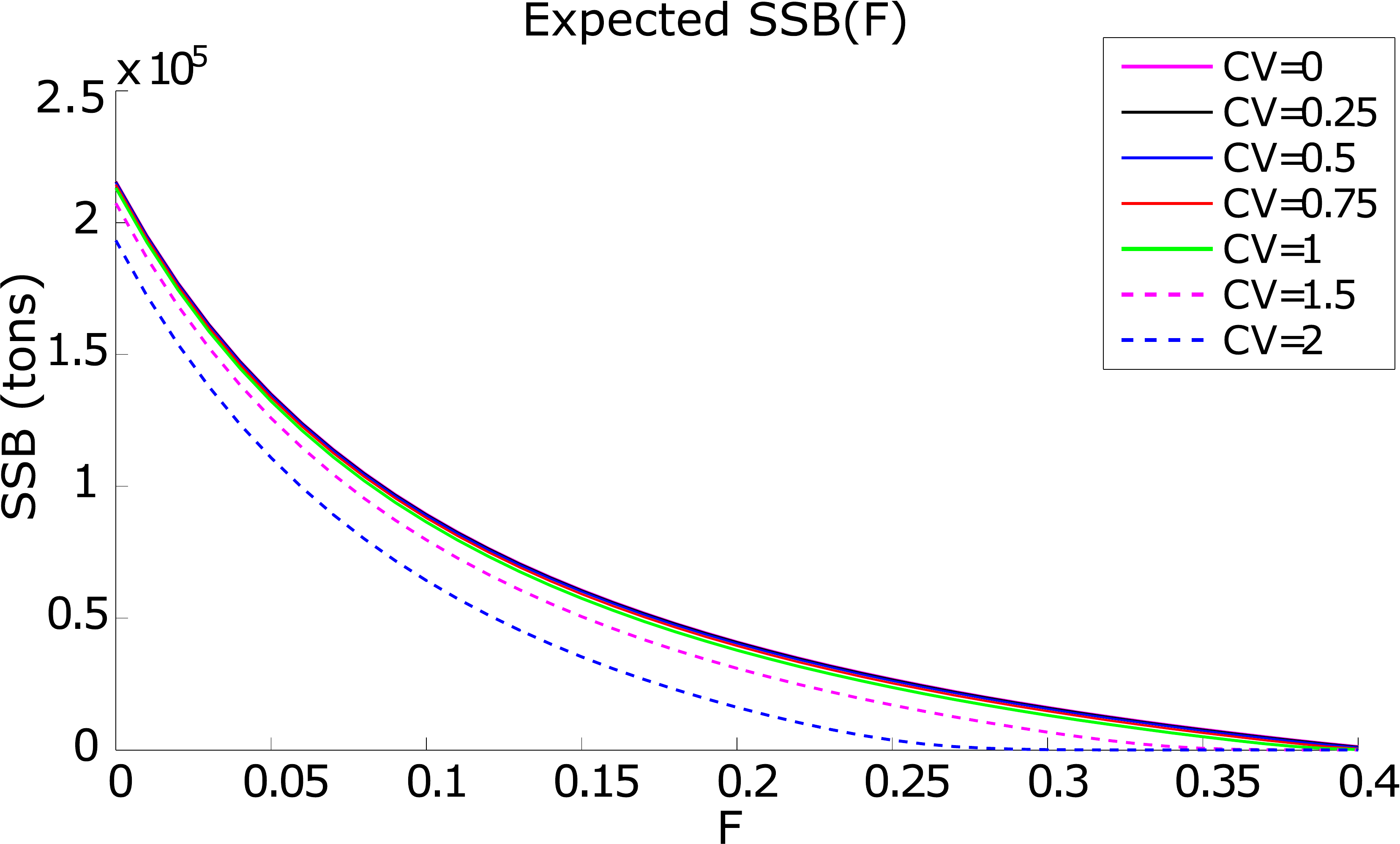}
\captionof{figure}{Spawning stock biomass as function of fishing mortality, for different values of $CV$.}\label{fig:002}
\end{center}
\end{minipage}
\end{center}
\end{figure}

In Figures \ref{fig:101} and \ref{fig:102} the behavior of the mean trajectories that lead to the maximum expected yield and the corresponding spawning stock biomass as function of time are compared to the corresponding deterministic trajectories. If $CV$ increases, the mean yield decreases from its deterministic value. This behavior is not observed for SSB. Note that each trajectory is computed using the optimal fishing mortality $F_{\rm MESY}(CV)$ corresponding to the respective value of $CV$. 
\begin{figure}[h!]
\begin{center}
\begin{minipage}{8cm}
\begin{center}
\includegraphics[scale=0.23]{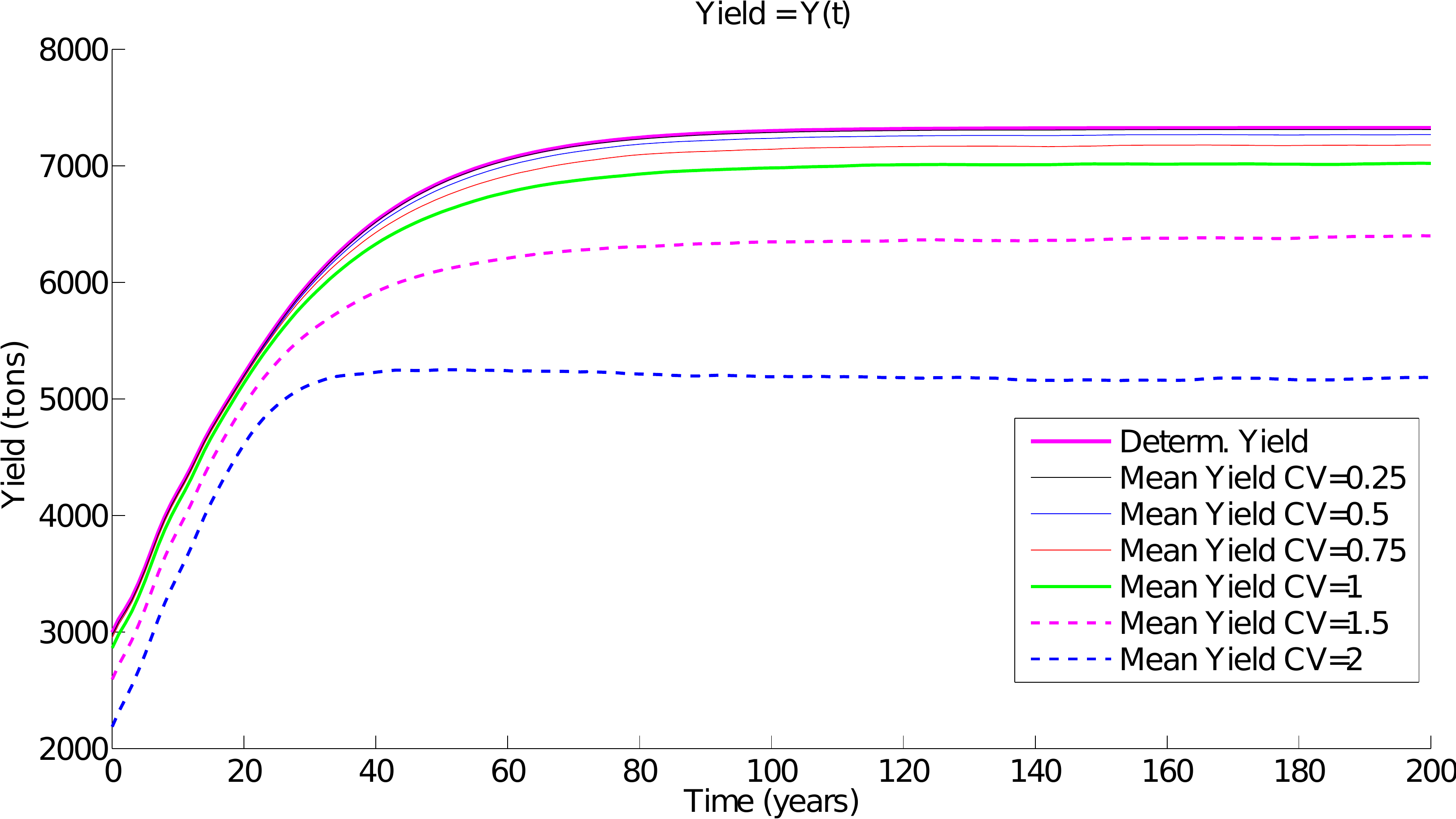}
\captionof{figure}{Deterministic and mean yield as function of time for different values of $CV$.}\label{fig:101}
\end{center}
\end{minipage}
\ \
\begin{minipage}{8cm}
\begin{center}
\vspace{4mm}
\includegraphics[scale=0.23]{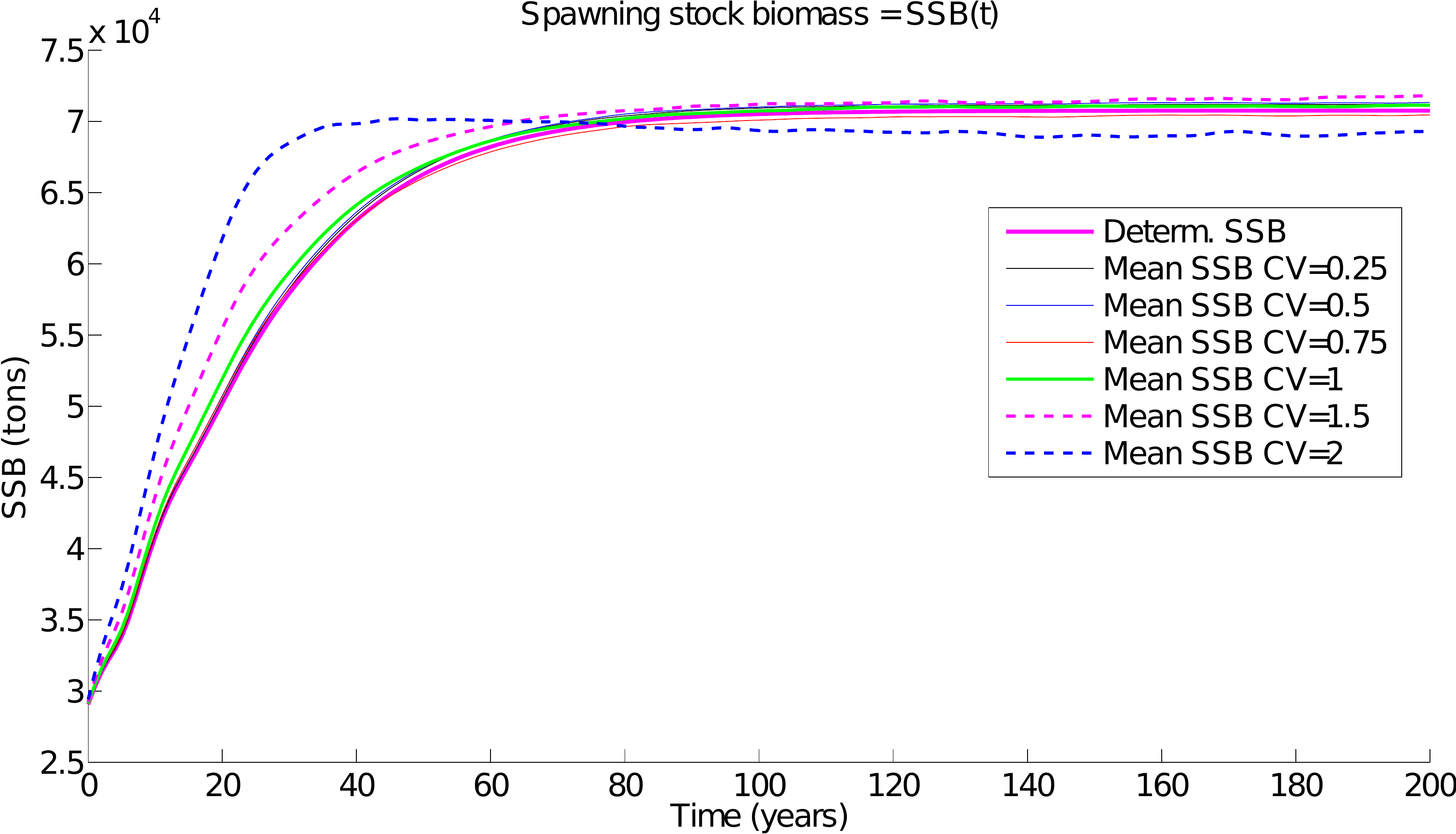}
\captionof{figure}{Deterministic and mean SSB as function of time for different values of $CV$.}\label{fig:102}
\end{center}
\end{minipage}
\end{center}
\end{figure}

Figure \ref{fig:201} shows the comparison of the probability density functions of yield (at the corresponding $F_{\rm MESY}$) for different values of $CV$. In this figure the empirical probability density function in the interval $[p_{2.5\%},p_{95\%}]$ (which is the interval in which is contained 95\% of the final values of yield, with tails of 2.5\% of the values) is shown. As the figure shows, if the coefficient of variation $CV$ increases, the yield distribution becomes more spread and moves to the left, and so do its mean values. The size of the confidence intervals also increases with $CV$, showing the necessity to consider more conservative yield measures under high volatility.
\begin{center}
\includegraphics[scale=0.35]{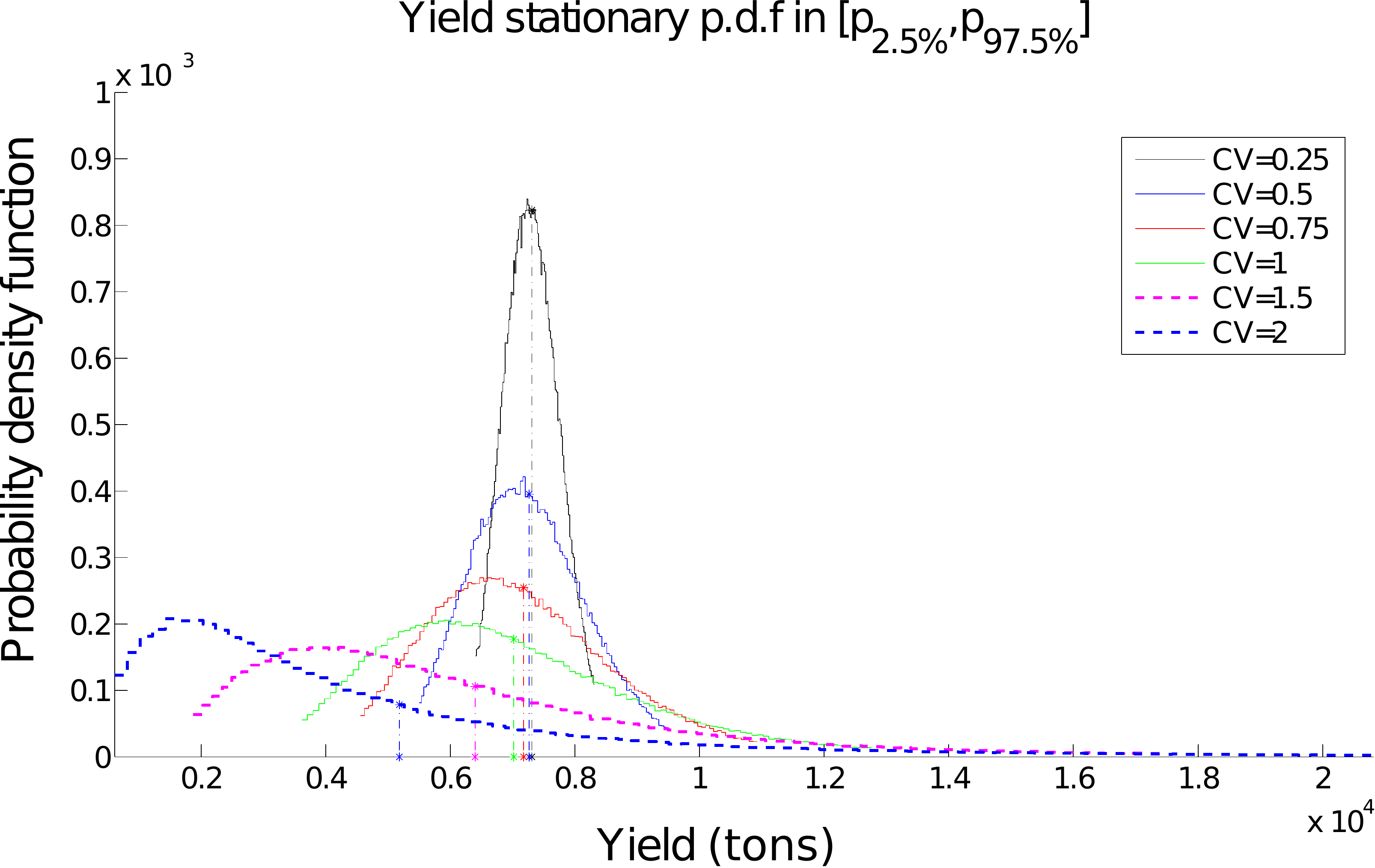}
\captionof{figure}{Probability density function of stationary yield, for different values of $CV$.}\label{fig:201}
\end{center}

In Figure \ref{fig:401} we show a comparison between the deterministic, arithmetic and geometric means of the yield as functions of the fishing mortality $F$. For small values of $CV$, the differences are not too notorious, but for large values of $CV$ they become apparent, both in the optimal mean values and in the optimal fishing mortality.
\begin{figure}[h!]
\begin{center}
\begin{minipage}{8cm}
\begin{center}
\includegraphics[scale=0.23]{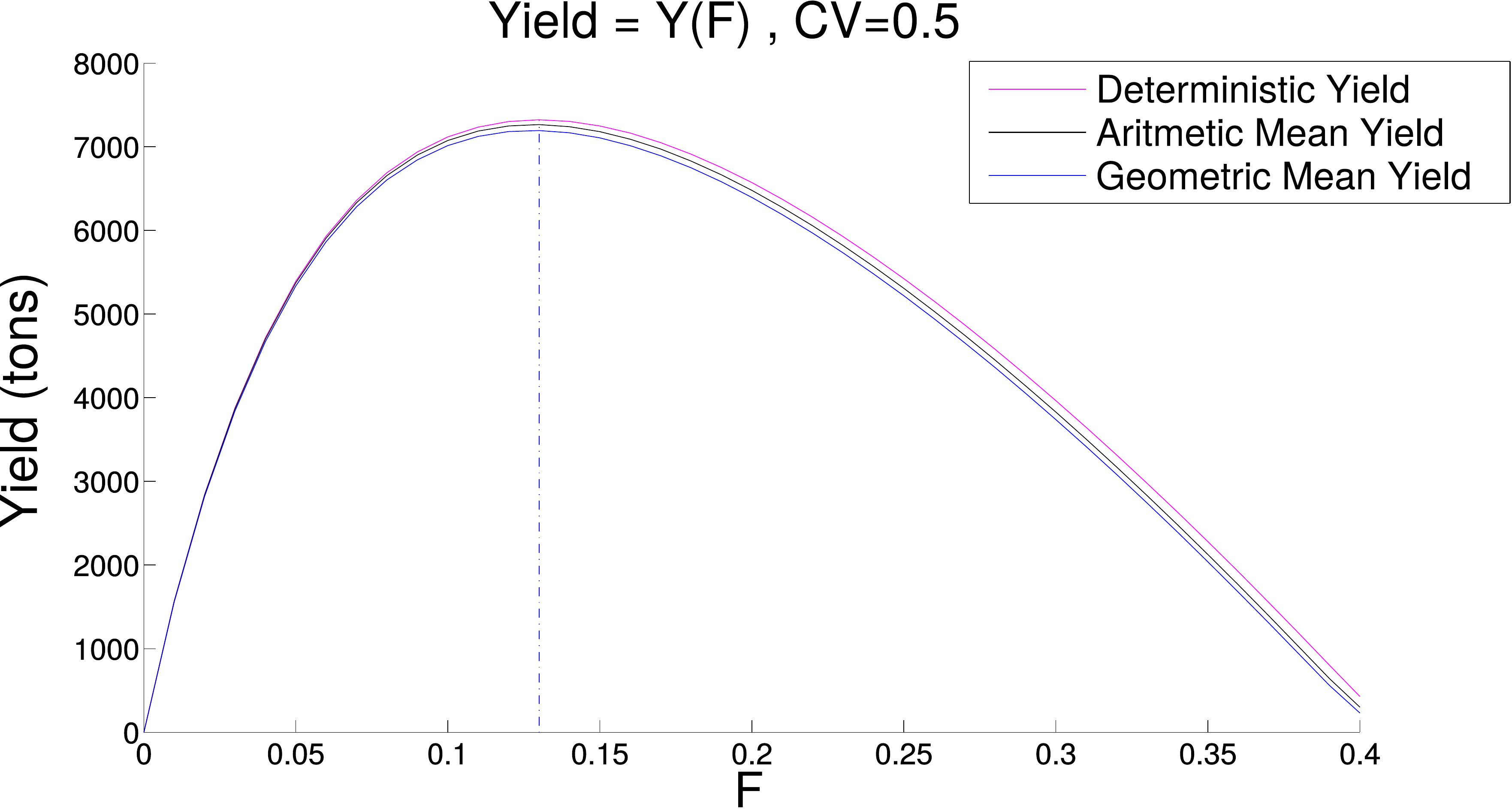}
\end{center}
\end{minipage}
\ \
\begin{minipage}{8cm}
\begin{center}
\includegraphics[scale=0.23]{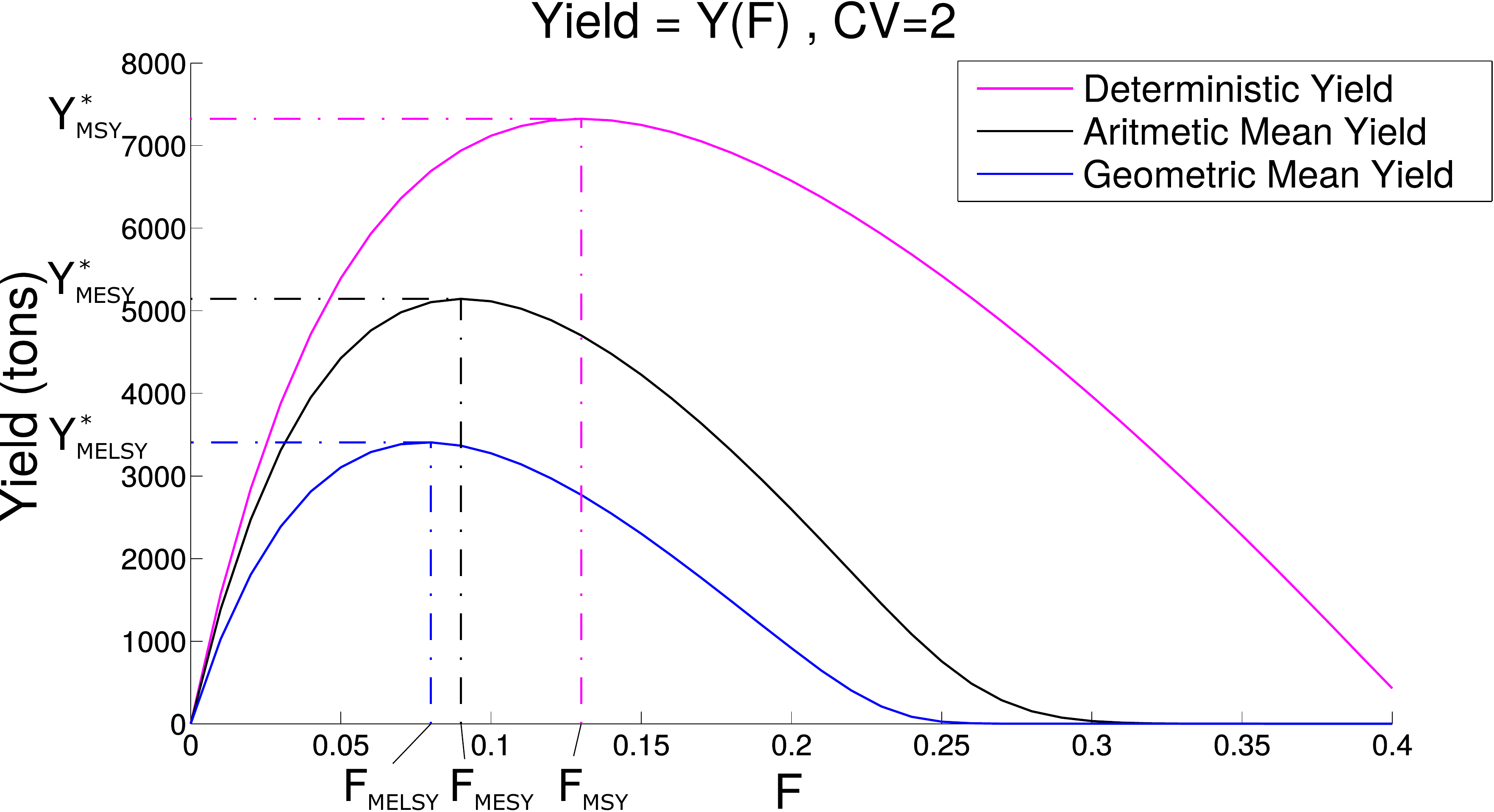}
\end{center}
\end{minipage}
\captionof{figure}{Comparison between deterministic, arithmetic and geometric yield. On the left, for $CV=0.5$, on the right, $CV=2$.}\label{fig:401}
\end{center}
\end{figure}

In Table \ref{tab:aux001} and Figure \ref{fig:aux987} the behavior of the yield and spawning stock biomass are shown for the different maximum yield measures, as well as their behavior with the current fishing mortality $F_{\rm current}=0.292$, for different values of $CV$. If the current fishing mortality is maintained, it leads to small values of equilibrium yield compared to the maximum sustainable yield; this situation becomes critical when there is high volatility in the recruitment (this is, for large values of $CV$), as shown in Figures \ref{fig:302}-\ref{fig:304}.  Indeed, Figure \ref{fig:302} emphasizes that any of the optimal constant fishing mortalities studied in this paper has a better performance in the equilibrium than the current fishing mortality for large levels of volatility. The same can be checked for the time evolution of SSB. In both cases the application of the current fishing mortality leads to a slow but constant decrease of the yield and SSB levels, concluding that overexploitation can lead to extinction, whereas the application of an optimal fishing mortality can maintain accepable levels of population even in scenarios with high volatility.
\begin{figure}[h!]
\begin{center}
\begin{minipage}{6cm}
\vspace{0.4cm}
\captionof{table}{Equilibrium values of yield and SSB (in tons) for the current fishing mortality $F_{\rm current}=0.292$, for different values of $CV$}\label{tab:aux001}
\vspace{0.4cm}
\begin{tabular}{||c||cc||}\hline\hline
  $CV$    & Mean Yield &  Mean  SSB \\ \hline \hline
    0     & 4223  & 17699 \\
  0.25    & 4192  & 17569 \\
  0.5     & 4094  & 17157 \\
  0.75    & 3890  & 16303 \\        
  1       & 3544  & 14852 \\
  1.5     & 2290  &  9603 \\
  2       &  827  &  3474 \\ \hline \hline
\end{tabular}
\end{minipage}
\ \ \hspace{0.5cm}
\begin{minipage}{8cm}
\includegraphics[scale=0.27]{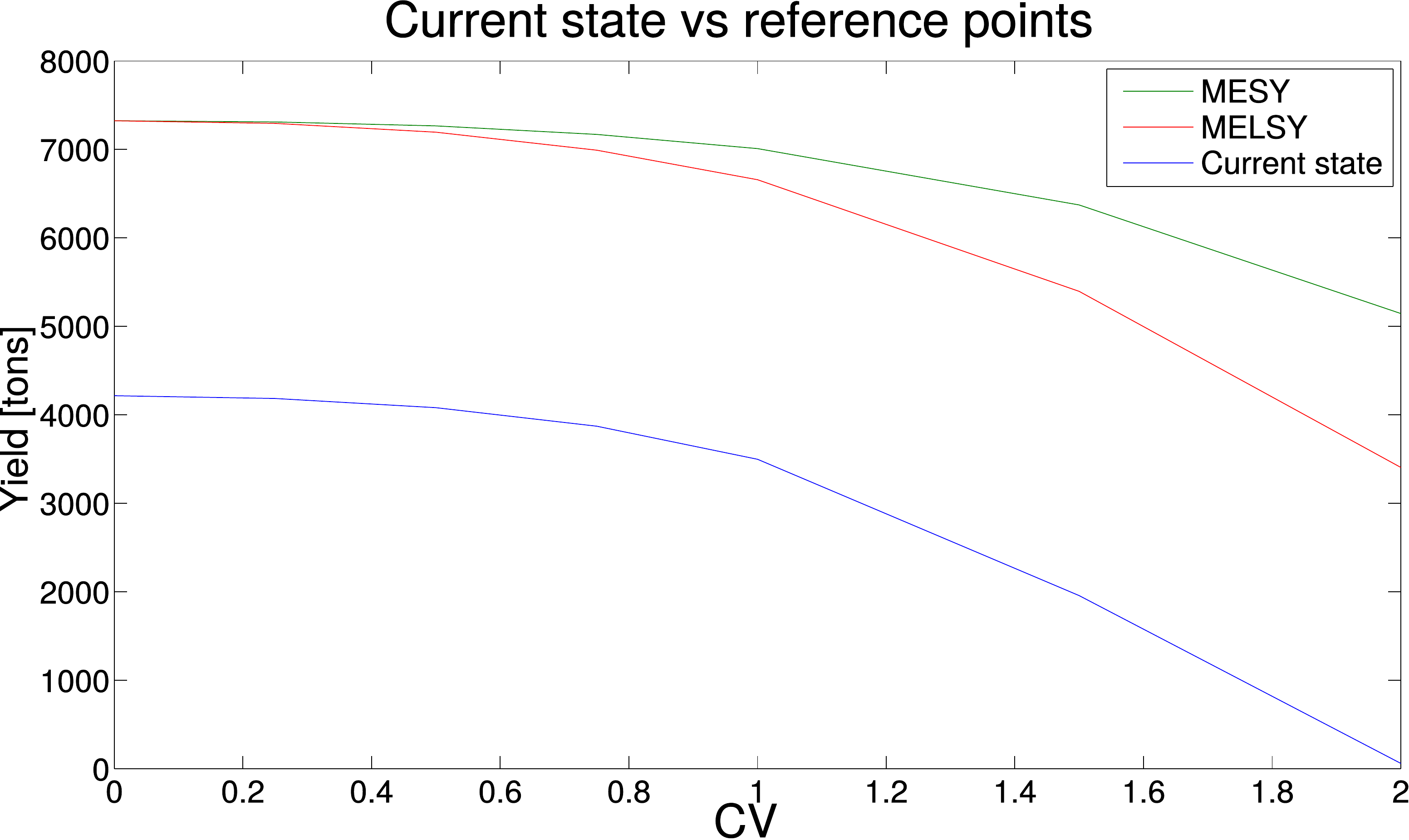}
\captionof{figure}{Yield (in tons) as function of $CV$ for the current fishing mortality $F_{\rm current}=0.292$, compared with the optimal yield measures}\label{fig:aux987}
\end{minipage}
\end{center}
\end{figure}

\begin{center}
\begin{minipage}{8cm}
\begin{center}
\includegraphics[scale=0.23]{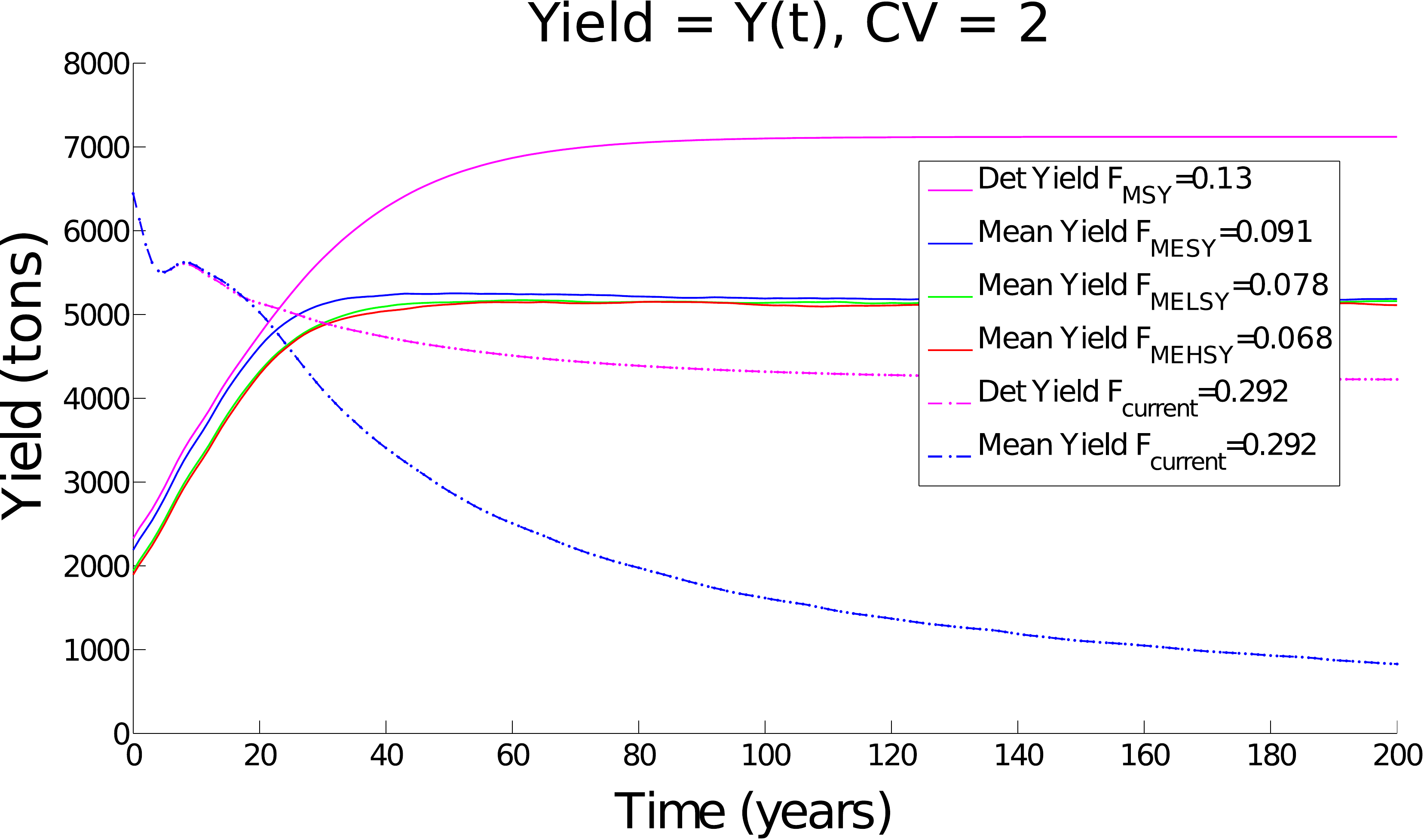}
\captionof{figure}{Yield measures for $CV=2$.}\label{fig:302}
\end{center}
\end{minipage}
%
%
%
\ \
\begin{minipage}{8cm}
\begin{center}
\includegraphics[scale=0.23]{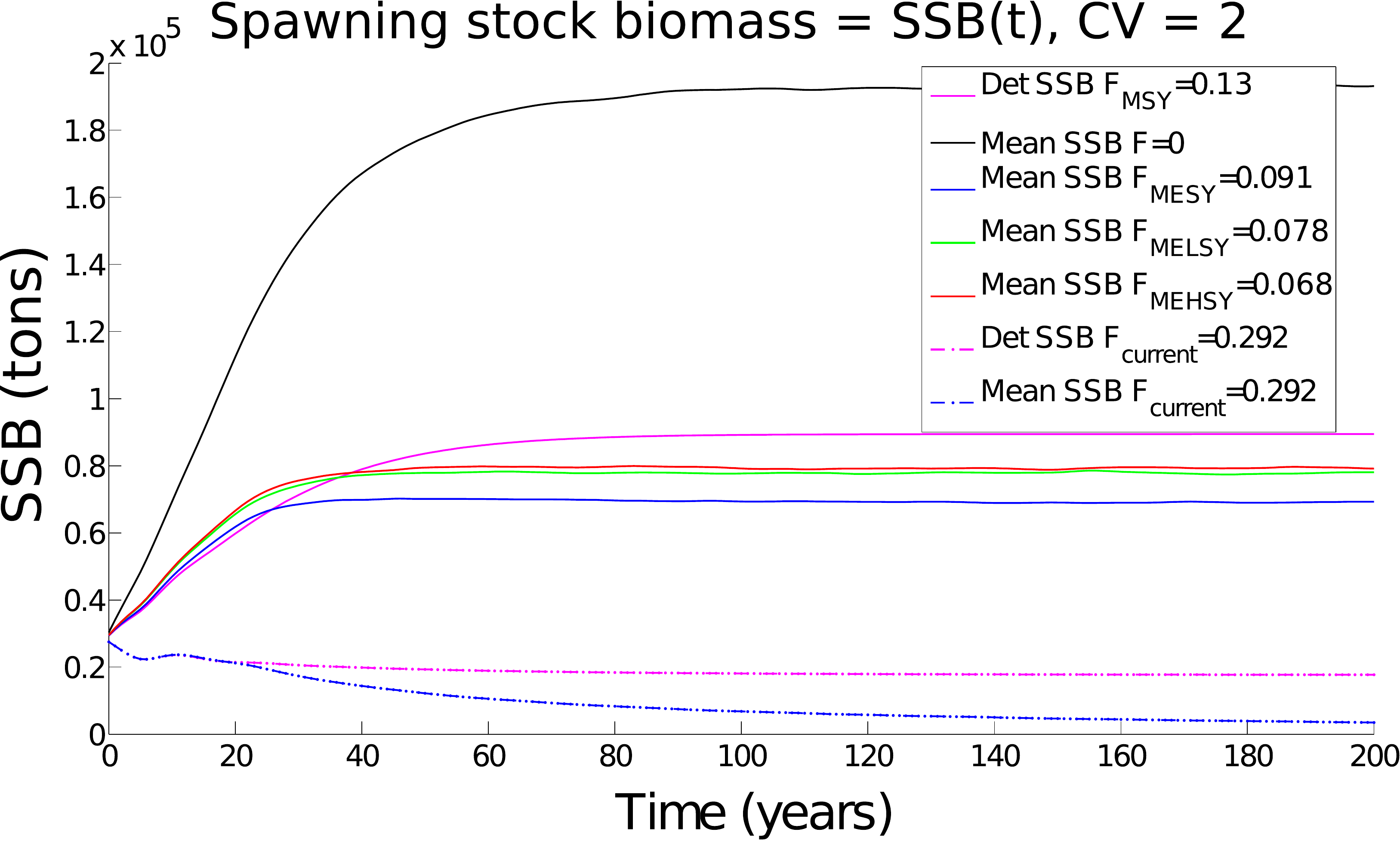}
\captionof{figure}{SSB for different yield measures, for $CV=2$.}\label{fig:304}
\end{center}
\end{minipage}
\end{center}

\section{Discussion}

Maximum sustainable yield is an important biological reference point that can be used to assess the status of fisheries and develop regulations as well as harvest control rules, and its estimation can be highly affected by the age structure of a fish population, and by multiple types of uncertainty, so biomass-based models and deterministic models are not accurate enough to compute the biological reference points associated to the respective fishery. In this work we propose to consider a mathematical model consisting in a stochastic single-species age-structured model for fisheries composed by several fleets that accounts for environmental and biological variability, and we extend the concept of MSY to the stochastic case via the concepts of MESY, MELSY, and MEHSY, that can be used as reference points provided that some degree of volatility is witnessed in the catches. We showed that this stochastic model could be applied to the study of the Patagonian toothfish fishery (Chilean and Argentinean stock), computing via numerical simulations using Monte Carlo method the maximum (constant) fishing mortalities $F_{\rm MESY}, F_{\rm MELSY}, F_{\rm MEHSY}$ for small and large volatility levels, and comparing them with the optimal deterministic fishing mortality $F_{\rm MSY}$ (theoretically obtained), concluding:
\begin{itemize}\itemsep0em
\item Yield is more variable and its mean value is smaller as the coefficient of variation increases. This fact needs to be considered when constructing confidence intervals for the maximum expected sustainable yields.
\item The maximum expected yields and optimal fishing mortalities decrease as the variability of fish recruitment increases, showing us the need to be more cautious when there is stochasticity present in recruitment.
\item For small volatility levels, the differences between MESY, MELSY, and MEHSY are not big, whereas for large levels of volatility they become apparent, in both the optimal mean values and the optimal fishing mortality. This is particularly important since fishing mortalities that allow a fishery to survive in the deterministic setting can be levels of overexploitation that lead to the extinction of the resource for high levels of volatility.
\end{itemize}

Moreover, the deterministic MSY is unlikely to be reached since volatility naturally occurs, and accurate methods to estimate this volatility are needed to asses the maximum expected yields and fishing mortalities. In this work we show that in order for sustainable harvest to occur, proper accounting of stochasticity in recruitment dynamics is mandatory. Conversely, not accounting for stochasticity in recruitment dynamics can lead in the worst case to extinction. Consequently, stochastic models should be used to develop regulation policies, especially for overexploited species in need of rebuilding. This is particularly important for decision makers at government management agencies.

\section{Acknowledgments}

This research was partially supported by FONDECYT project 1160204 and BASAL program CMM-AFB 170001 from CONICYT, Chile.


\bigskip



\appendix
\noindent{\bf\huge Appendix}
\medskip

\section{Adjustment of equilibrium equations for a plus group}\label{app:1}

Consider the equations for an age-structured dynamic with a plus group as in system \eqref{eq:modelBacalao}, written in short form as in \eqref{eq:matrix_form}. 
For the computation of the equilibrium we solve the equation 
\begin{equation}
N^{\star} = A(F)N^{\star} + B\varphi({\rm SSB}^{\star}),
\end{equation}
which translates to the equations
\begin{equation*}
\begin{split}
N_1^{\star} = \varphi({\rm SSB}^{\star}),\quad
N_{a+1}^{\star} = N_a^{\star}e^{-Z_a},\,\, a=1,\dots,A-2,\quad
N_{A}^{\star} = N_{A-1}^{\star}e^{-Z_{A-1}} + N_{A}^{\star}e^{-Z_{A}},
\end{split}
\end{equation*}
with 
\begin{equation}\label{eq:aux00}
{\rm SSB}^{\star} = \sum_{a=r}^{A}m_aW_aN_a^{\star}e^{-\tau Z_a}
\end{equation}
A recurrence formula for the abundances $N_{a+1}^{\star}$ can be derived, which depends on $N_1^{\star}$, for $a=2,\dots,A-1$:
\begin{equation}\label{eq:aux01}
N_a^{\star} = N_{a-1}^{\star}e^{-Z_{a-1}}=N_{a-2}^{\star}e^{-Z_{a-2}}e^{-Z_{a-1}} = \dots = N_1^{\star}\prod_{x=1}^{a-1}e^{-Z_{x}}.
\end{equation}
For the plus group, we have
\begin{equation}\label{eq:aux02}
N_{A}^{\star} = \frac{e^{-Z_{A-1}}}{1-e^{-Z_{A}}}N_{A-1}^{\star}= \frac{\prod_{x=1}^{A-1}e^{-Z_{x}}}{1-e^{-Z_{A}}}N_{1}^{\star}
\end{equation}
Define the \emph{cumulative survival} $\mathcal L_a$ and the \emph{spawning potential ratio} ${\rm SPR}^{\star}$ as
\begin{eqnarray}
\mathcal L_a &=&\prod_{x=1}^{a-1}e^{-Z_x} = \exp\left\{-\sum_{x=1}^{a-1}Z_x\right\},\quad\mathcal L_1=1,\label{eq:defLa}\\
{\rm SPR}^{\star} &=& \sum_{a=r}^{A-1}W_am_a\mathcal L_ae^{-\tau Z_a} + \frac{W_{A}m_{A}\mathcal L_{A}}{1-e^{-Z_{A}}}e^{-\tau Z_{A}}.\label{eq:defSPR}
\end{eqnarray}
Replacing \eqref{eq:aux01} and \eqref{eq:aux02} in \eqref{eq:aux00}, we obtain
\begin{equation}\label{eq:aux03}
{\rm SSB}^{\star} = N_1^{\star}{\rm SPR}^{\star}.
\end{equation}
where $N_1^{\star}$ solves the nonlinear equation:
\begin{equation}\label{eq:aux04}
N_1^{\star} = \varphi(N_1^{\star}{\rm SPR}^{\star}).
\end{equation}

Summarizing, the abundances at equilibrium solve
\begin{equation*}
\begin{split}
N_1^{\star} = \varphi(N_1^{\star}{\rm SPR}^{\star}),\quad N_{a}^{\star} = N_1^{\star}\mathcal L_a,\,\, a=2,\dots,A-1,\quad N_{A}^{\star} = N_{1}^{\star}\frac{\mathcal L_{A}}{1-e^{-Z_{A}}}.
\end{split}
\end{equation*}

With the previous values for $N^*$, the yield at equilibrium is function of the number of recruits $N_1^{\star}$
\begin{equation}\label{eq:yieldEquilibro}
Y^{\star} = \sum_{a=r}^{A} W_a \frac{F_a}{Z_a}(1-e^{-Z_a})N_{a}^{\star}= N_1^{\star}\left(\sum_{a=r}^{A-1} W_a \frac{F_a}{Z_a}(1-e^{-Z_a})\mathcal L_{a} + W_{A} \frac{F_{A}}{Z_{A}}\mathcal L_{A} \right).
\end{equation}

\begin{remark}
The spawning potential ratio ${\rm SPR^{\star}}$ represents the quantity of spawning stock biomass produced by one unit of recruits. 
\end{remark}

We can write ${\rm SPR^{\star}}$ in a simpler form. Defining 
\begin{equation*}
\tilde M_a = \sum_{x=1}^{a-1}M_x+\tau M_a,\quad\mbox{ and }\quad\tilde S_a = \sum_{f=1}^{n}P_f\left(\sum_{x=1}^{a-1}s_{f,x} + \tau s_{f,a}\right),
\end{equation*}
we have
\begin{equation}\label{eq:SPRsimple}
{\rm SPR}^{\star}(F) = \sum_{a=r}^{A-1}W_am_ae^{-\tilde M_a}e^{-\tilde S_a F} + W_{A}m_{A}\frac{ e^{-\tilde M_{A}}e^{-\tilde S_{A}F}}{1-e^{-M_{A}}e^{-\sum_{f=1}^nP_fs_{f,A}F} }.
\end{equation}

\begin{proposition}\label{prop:SPR}
The function $F\mapsto {\rm SPR^{\star}(F)}$ is decreasing and converges to 0 as $F$ goes to infinity.
\end{proposition}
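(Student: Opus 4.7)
The plan is to work directly from the closed-form expression \eqref{eq:SPRsimple} and verify monotonicity and the limit term by term, since the sum is finite. The two groups of terms behave differently and are treated separately.

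First I would handle the ``regular'' summands $a\in\{r,\dots,A-1\}$, each of the form $W_am_ae^{-\tilde M_a}e^{-\tilde S_aF}$. These are positive constants times a decreasing exponential in $F$, because $\tilde S_a = \sum_f P_f\bigl(\sum_{x=1}^{a-1}s_{f,x}+\tau s_{f,a}\bigr)\ge 0$. Differentiating in $F$ gives a non-positive derivative, and whenever some fleet has positive selectivity at some age $\le a$ one obtains strict negativity. Passing $F\to\infty$ makes each such term tend to $0$ exponentially fast.

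Next I would analyze the plus-group term. Set $\sigma_A=\sum_{f=1}^nP_fs_{f,A}$ and
\[
\psi(F) \;=\; \frac{e^{-\tilde S_AF}}{1-e^{-M_A}e^{-\sigma_A F}}.
\]
The cleanest way to show $\psi$ is non-increasing is via the logarithmic derivative, which separates the two sources of $F$-dependence:
\[
\frac{\psi'(F)}{\psi(F)} \;=\; -\tilde S_A \;-\; \frac{\sigma_A\,e^{-M_A-\sigma_AF}}{1-e^{-M_A-\sigma_AF}}.
\]
Both terms are non-positive because $\tilde S_A,\sigma_A\ge 0$ and $0<e^{-M_A-\sigma_AF}<1$ (so the denominator is strictly positive). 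Multiplying by the positive weight $W_Am_Ae^{-\tilde M_A}$ preserves monotonicity. For the limit, the numerator $e^{-\tilde S_AF}\to 0$ (using $\tilde S_A\ge\tau\sigma_A$ together with positivity of selectivity at age $A$ or at some earlier age) while the denominator stays bounded below away from zero, either by $1-e^{-M_A}$ when $\sigma_A=0$ or tending to $1$ when $\sigma_A>0$. Hence the plus-group term also vanishes in the limit.

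Putting the two pieces together, ${\rm SPR}^{\star}(F)$ is a finite sum of non-increasing functions converging to $0$, so both conclusions follow. The only conceptual obstacle is the plus-group term, where the denominator is itself $F$-dependent and the naive inequality ``numerator decreases, denominator decreases'' is inconclusive; the logarithmic-derivative computation above is what makes that step go through cleanly without needing to discuss the sign of $\psi'$ directly.
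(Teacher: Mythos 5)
Your proof is correct and follows essentially the same route as the paper: differentiate the closed-form expression \eqref{eq:SPRsimple} term by term, check that every derivative is non-positive, and pass to the limit in each summand of the finite sum. The only cosmetic difference is that you treat the plus-group quotient via its logarithmic derivative while the paper expands the quotient rule directly on ${\mathcal L_A e^{-\tau Z_A}}/(1-e^{-Z_A})$; the resulting sign argument is the same.
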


\begin{proof}
The derivatives of $Z_a$ and $\mathcal L_a$ with respect to $F$ are
\begin{equation}\label{aux:0001}
\frac{\partial Z_a(F)}{\partial F} = \sum_{f=1}^ns_{f,a}P_f,\qquad\frac{\partial \mathcal L_a(F)}{\partial F} = -\mathcal L_a\sum_{f=1}^n\sum_{x=1}^{a-1}s_{f,x}P_f.
\end{equation}
We compute the derivative of ${\rm SPR^{\star}}$ (given in \eqref{eq:SPRsimple}) with respect to $F$, using \eqref{aux:0001}:
\begin{equation*}
\begin{split}
\frac{\partial{\rm SPR^{\star}}(F)}{\partial F} &= -\sum_{a=r}^{A-1} W_am_a\mathcal L_ae^{-\tau Z_a}\left[ \sum_{f=1}^nP_f\left(ps_{f,a} + \sum_{x=1}^{a-1}s_{f,x} \right) \right]\\
& \quad- \frac{W_{A}m_{A}}{(1-e^{-Z_{A}})^2}\mathcal L_{A}e^{-pZ_{A}}\left\{ \sum_{f=1}^nP_f\left[\left( ps_{f,{A}}+\sum_{x=1}^{{A}-1}s_{f,x} \right)(1-e^{-Z_{A}}) + e^{-Z_{A}}s_{f,{A}} \right] \right\},
\end{split}
\end{equation*}
which is negative for all values of $F$, thus proving that $F\mapsto{\rm SPR}^{\star}(F)$ is decreasing. For the second statement, we take limits in \eqref{eq:SPRsimple} when $F\rightarrow\infty$ and we see that both terms of the right-hand side converge to zero.\qed
\end{proof}

The equilibrium spawning potential ratio ${\rm SPR}^{\star}$ is maximized in $F=0$:
\begin{equation}\label{eq:SPRmax}
{\rm SPR}^{\star}(0) = \sum_{a=r}^{A-1}W_am_ae^{-\tilde M_a} + W_{A}m_{A}\frac{ e^{-\tilde M_{A}}}{1-e^{-M_{A}} }.
\end{equation}
This value is important for the existence of a positive equilibrium value of the equilibrium abundances. For the particular case of a Beverton-Holt spawner-recruit function of the form
\begin{equation}\label{eq:bvfunction2}
\varphi({\rm SSB}) = \frac{\alpha {\rm SSB}}{\beta + {\rm SSB}}, 
\end{equation}
the equilibrium $\rm SSB^{\star}$ and abundance $N_1^{\star}$ are
\begin{equation*}
{\rm SSB}^{\star} = \alpha{\rm SPR}^{\star}-\beta,\qquad N_1^{\star} = \alpha-\frac{\beta}{{\rm SPR^{\star}}}.
\end{equation*}
Then, a condition for the existence of $F\geq0$ such that the corresponding equilibrium point is positive is that 
\begin{equation}\label{eq:condicionExistencia}
{\rm SPR}^{\star}(0) = \sum_{a=r}^{A-1}W_am_ae^{-\tilde M_a} + W_{A}m_{A}\frac{ e^{-\tilde M_{A}}}{1-e^{-M_{A}} } > \frac{\beta}{\alpha}.
\end{equation}
This condition is completely related to the capacity of the fish population to survive in the environment. We conclude the following proposition:

\begin{proposition}\label{prop:existencia}
Suppose that the spawner-recruit function is of the form \eqref{eq:bvfunction2} and that condition \eqref{eq:condicionExistencia} is satisfied. Then, there exists a value $F_{\rm ext}>0$ that solves the equation
\begin{equation}\label{eq:conclusion_existence}
{\rm SPR^{\star}}(F_{\rm ext})= \frac{\beta}{\alpha},
\end{equation}
and then $F^{\star}_{\rm MSY}$ belongs to the interval $[0,F_{\rm ext}]$.
\end{proposition}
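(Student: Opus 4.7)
The plan is to combine the monotonicity and limit behavior already established for ${\rm SPR}^{\star}(\cdot)$ with the intermediate value theorem, then to use the explicit form \eqref{eq:equilib_N1} to show that equilibrium yield becomes non-positive beyond the crossing point $F_{\rm ext}$.

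\textbf{Step 1: Existence and uniqueness of $F_{\rm ext}$.} By Proposition \ref{prop:SPR}, the map $F\mapsto{\rm SPR}^{\star}(F)$ is strictly decreasing on $[0,\infty)$ and tends to $0$ as $F\to\infty$. From the closed-form expression \eqref{eq:SPRsimple}, ${\rm SPR}^{\star}$ is continuous in $F$ (it is a finite sum of smooth exponentials with denominator bounded away from zero for $F\ge 0$, since $1-e^{-M_A}e^{-\sum_f P_f s_{f,A}F}\ge 1-e^{-M_A}>0$). Hypothesis \eqref{eq:condicionExistencia} gives ${\rm SPR}^{\star}(0)>\beta/\alpha$, while the limit yields ${\rm SPR}^{\star}(F)<\beta/\alpha$ for all $F$ large enough. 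The intermediate value theorem produces some $F_{\rm ext}>0$ satisfying \eqref{eq:conclusion_existence}, and strict monotonicity ensures that $F_{\rm ext}$ is unique.

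\textbf{Step 2: The yield is non-positive beyond $F_{\rm ext}$.} For $F>F_{\rm ext}$, strict monotonicity gives ${\rm SPR}^{\star}(F)<\beta/\alpha$, hence by \eqref{eq:equilib_N1}
\[
N_1^{\star}(F)=\alpha-\frac{\beta}{{\rm SPR}^{\star}(F)}<0.
\]
Plugging this into \eqref{eq:equilib_Y} and noting that the bracketed factor
\[
\sum_{a=r}^{A-1}W_a\frac{F_a}{Z_a}(1-e^{-Z_a})\mathcal{L}_a+W_A\frac{F_A}{Z_A}\mathcal{L}_A
\]
is non-negative (every $W_a,F_a,Z_a,\mathcal{L}_a$ is non-negative and $1-e^{-Z_a}\ge0$), we deduce $Y^{\star}(F)\le 0$ for all $F>F_{\rm ext}$. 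Such $F$ correspond to non-viable equilibria and therefore cannot realize the maximum.

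\textbf{Step 3: Localizing $F_{\rm MSY}$.} The map $F\mapsto Y^{\star}(F)$ is continuous on the compact interval $[0,F_{\rm ext}]$, so it attains its maximum there; call this value $M$. From Step 2, the supremum of $Y^{\star}$ on $(F_{\rm ext},\infty)$ is non-positive, while $Y^{\star}(F)\ge 0$ for some $F\in(0,F_{\rm ext})$ (pick any $F$ small enough so that $N_1^{\star}(F)>0$, which holds by continuity of ${\rm SPR}^{\star}$ and the hypothesis at $F=0$). Hence $M\ge 0\ge\sup_{F>F_{\rm ext}}Y^{\star}(F)$, so $Y^{\star}_{\rm MSY}=M$ and $F_{\rm MSY}\in[0,F_{\rm ext}]$, as claimed.

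The argument is essentially routine once Proposition \ref{prop:SPR} is in hand; the only point that requires mild care is verifying continuity of ${\rm SPR}^{\star}$ uniformly up to the plus-group denominator and confirming that $Y^{\star}$ is strictly positive somewhere in $(0,F_{\rm ext})$, so that the maximum is not a degenerate boundary value at $F=0$.
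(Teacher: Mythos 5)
Your proposal is correct and follows essentially the same route as the paper: apply the intermediate value theorem to $F\mapsto{\rm SPR}^{\star}(F)-\beta/\alpha$ using the monotonicity and vanishing limit from Proposition \ref{prop:SPR}, then observe that $N_1^{\star}(F)<0$ for $F>F_{\rm ext}$ forces the equilibrium yield to be non-positive there, so the maximizer lies in $[0,F_{\rm ext}]$. Your Steps 2--3 merely spell out in more detail (non-negativity of the bracketed factor in \eqref{eq:equilib_Y}, attainment of the maximum on the compact interval, uniqueness of $F_{\rm ext}$) what the paper states in one sentence.
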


\begin{proof}
Consider the function $g(F):={\rm SPR^{\star}}(F)-\frac{\beta}{\alpha}$. This is a continuous and differentiable function, since $F\mapsto {\rm SPR}^{\star}(F)$ is continuous and differentiable. Thanks to \eqref{eq:condicionExistencia} we have $g(0)>0$; from Proposition \ref{prop:SPR} we see that $F\mapsto g(F)$ is decreasing and $\lim_{F\rightarrow\infty} g(F)=-\frac{\beta}{\alpha}<0$. By the intermediate value theorem there exists then a point $F_{\rm ext}$ such that $g(F_{\rm ext})=0$, or equivalently, \eqref{eq:conclusion_existence} is satisfied. The fact that $F^{\star}_{\rm MSY}\in [0,F_{\rm ext}]$ is concluded because for any $F>F_{\rm ext}$ the value of equilibrium of ${\rm SSB^{\star}(F)}$ and $N_1^{\star}(F)$ are negative, leading to negative equilibrium yield from the formula \eqref{eq:yieldEquilibro}.\qed
\end{proof}

\section{Parameters}\label{app:2}

In this section we present the values of the parameters that are too long to be detailed in the main body of the paper. The selectivity by age and fleet is given in Table \ref{table:app1}. In Table \ref{table:app2} we show the initial condition for the model. Since the recruitment at a given year depends of the spawning stock biomass of the previous year, we should take as initial condition the data corresponding to the last year of the measurements. Here, $N_{-2}$ shows the abundance by age (in millions) for the year 2012, $N_{-1}$ corresponds to year 2013 and $N_{0}$ corresponds to year 2014, and $N_{-2}$ and $N_{-1}$ are used to estimate the initial abundances for ages 1 and 2 for $N_0$. In Table \ref{table:app3} we show the values of weight, maturity, and natural mortality considered for the numerical simulations.
\begin{center}
\captionof{table}{Selectivity by age and fleet from Tascheri and Canales \cite{bacalao}.}\label{table:app1}
\begin{tabular}{||c||c|c|c|c||c||c|c|c|c||}\hline\hline
Age (a) & Fleet 1 & Fleet 2 & Fleet 3 & Fleet 4  & Age (a) & Fleet 1 & Fleet 2 & Fleet 3 & Fleet 4 \\ \hline \hline
1   &       0  &       0  &       0  &       0  & 16  &  0.6347  &  0.3589  &  1.0000  &  1.0000  \\
2   &       0  &       0  &       0  &       0  & 17  &  1.0000  &  0.3589  &  1.0000  &  1.0000  \\
3   &  0.0002  &  0.0032  &  0.0008  &  0.0379  & 18  &  1.0000  &  0.3589  &  1.0000  &  1.0000  \\
4   &  0.0008  &  0.0113  &  0.0025  &  0.0600  & 19  &  1.0000  &  0.3589  &  1.0000  &  1.0000  \\
5   &  0.0032  &  0.0427  &  0.0076  &  0.0946  & 20  &  1.0000  &  0.3589  &  1.0000  &  1.0000  \\
6   &  0.0128  &  0.1969  &  0.0227  &  0.1493  & 21  &  1.0000  &  0.3589  &  1.0000  &  1.0000  \\
7   &  0.0468  &  0.7109  &  0.0658  &  0.2372  & 22  &  1.0000  &  0.3589  &  1.0000  &  1.0000  \\
8   &  0.1051  &  1.0000  &  0.1829  &  0.3766  & 23  &  1.0000  &  0.3589  &  1.0000  &  1.0000  \\
9   &  0.1491  &  0.8079  &  0.4390  &  0.5757  & 24  &  1.0000  &  0.3589  &  1.0000  &  1.0000  \\
10  &  0.1776  &  0.5793  &  0.7773  &  0.7980  & 25  &  1.0000  &  0.3589  &  1.0000  &  1.0000  \\
11  &  0.2103  &  0.4176  &  0.9813  &  0.9562  & 26  &  1.0000  &  0.3589  &  1.0000  &  1.0000  \\
12  &  0.2904  &  0.3589  &  1.0000  &  1.0000  & 27  &  1.0000  &  0.3589  &  1.0000  &  1.0000  \\
13  &  0.3689  &  0.3589  &  1.0000  &  1.0000  & 28  &  1.0000  &  0.3589  &  1.0000  &  1.0000  \\
14  &  0.4262  &  0.3589  &  1.0000  &  1.0000  & 29  &  1.0000  &  0.3589  &  1.0000  &  1.0000  \\
15  &  0.4822  &  0.3589  &  1.0000  &  1.0000  & 30$^{+}$ &  1.0000  &  0.3589  &  1.0000  &  1.0000  \\ \hline\hline
\end{tabular}
\end{center}

\begin{center}
\captionof{table}{Initial abundances (in millions) from Tascheri and Canales \cite{bacalao}. 
}\label{table:app2}
\begin{tabular}{||c||c|c|c||c||c|c|c||}\hline\hline
Age $(a)$ & $N_{-2}$ & $N_{-1} $ & $N_{0}$  & Age $(a)$ & $N_{-2}$ & $N_{-1}$ & $N_{0}$ \\ \hline\hline
1  &       *  &       *  &  2742.5 (est) & 16  &  131.8  &   138.6  &   171.9   \\
2  &       *  &       *  &  2847.4 (est) & 17  &  111.4  &    92.1  &    97.3   \\
3  &  1889.4  &  2492.1  &  2993.9  & 18  &   71.5  &    72.3  &    59.0   \\
4  &  1314.6  &  1623.5  &  2141.4  & 19  &   51.1  &    46.4  &    46.3   \\
5  &  1407.7  &  1127.8  &  1392.9  & 20  &   30.1  &    33.2  &    29.7   \\
6  &  1467.0  &  1202.5  &   963.9  & 21  &   25.6  &    19.5  &    21.3   \\
7  &  1499.2  &  1231.7  &  1012.5  & 22  &   18.9  &    16.6  &    12.5   \\
8  &  1189.8  &  1188.9  &   985.8  & 23  &   17.5  &    12.3  &    10.6  \\
9  &   735.4  &   902.3  &   910.5  & 24  &   11.3  &    11.4  &     7.9  \\
10  &  516.7  &   554.3  &   685.7  & 25  &    8.9  &     7.3  &     7.3  \\
11  &  541.1  &   386.7  &   416.4  & 26  &    5.8  &     5.8  &     4.7  \\
12  &  477.0  &   404.0  &   287.5  & 27  &    4.6  &     3.8  &     3.7  \\
13  &  429.4  &   354.7  &   298.7  & 28  &    2.9  &     3.0  &     2.4  \\
14  &  324.5  &   317.3  &   261.1  & 29  &    2.1  &     1.9  &     1.9  \\
15  &  192.5  &   237.4  &   232.2  & 30$^{+}$ &   10.5  &     8.2  &     6.5  \\ \hline\hline
\end{tabular}
\end{center}

\begin{center}
\captionof{table}{Weight, maturity, and natural mortality by age, from Tascheri and Canales \cite{bacalao}.}\label{table:app3}
\begin{tabular}{||c||c|c|c||c||c|c|c||}\hline\hline
Age (a) & Weight & Maturity & Natural mortality   & Age (a) & Weight & Maturity & Natural mortality  \\ \hline \hline
1   &       0  &    0   &       0  &        16  &  17.5183  &  0.99   &  0.15  \\
2   &       0  &    0   &       0  &        17  &  18.8940  &  0.99   &  0.15  \\
3   &  1.1123  &  0.01   &  0.15  &   18  &  20.5717  &  1     &  0.15  \\
4   &  1.2620  &  0.02   &  0.15  &   19  &  21.8620  &  1     &  0.15  \\
5   &  1.5820  &  0.03   &  0.15  &   20  &  24.6300  &  1     &  0.15  \\
6   &  2.4400  &  0.06   &  0.15  &   21  &  26.1187  &  1     &  0.15  \\
7   &  3.4893  &  0.11   &  0.15  &   22  &  28.0610  &  1     &  0.15  \\
8   &  4.8843  &  0.21   &  0.15  &   23  &  29.1047  &  1     &  0.15  \\
9   &  6.5013  &  0.34   &  0.15  &   24  &  31.1807  &  1     &  0.15  \\
10  &  8.4117  &  0.51   &  0.15  &   25  &  33.3417  &  1     &  0.15  \\
11  &  10.0500  &  0.68   &  0.15  &   26  &  35.9160  &  1     &  0.15  \\
12  &  11.3690  &  0.81   &  0.15  &   27  &  35.9857  &  1     &  0.15  \\
13  &  12.8947  &  0.90   &  0.15  &   28  &  43.1203  &  1     &  0.15  \\
14  &  14.6080  &  0.95   &  0.15  &   29  &  43.4607  &  1     &  0.15  \\
15  &  16.1203  &  0.97   &  0.15  &   30$^{+}$ &  51.8067  &  1   &  0.15    \\ \hline\hline
\end{tabular}
\end{center}


\begin{thebibliography}{}


\bibitem{brodziak} 
Brodziak, J., Rago, P., Conser, R., A general approach for making short-term stochastic projections from an age-structured fisheries assessment model. In Proceedings of the International Symposium on Fishery Stock Assessment Models for the 21st Century, 933-1012 (1998).

\bibitem{busquet}
Bousquet, N., Duchesne, T., Rivest, L. P., Redefining the maximum sustainable yield for the Schaefer population model including multiplicative environmental noise. Journal of Theoretical Biology, 254(1), 65-75 (2008).

\bibitem{caswell}
Caswell, H., Matrix population models. Sinauer Associates, Inc., Massachusetts (2001).

\bibitem{ewald}
Ewald, C.O., Wand, W.K., Sustainable yields in fisheries: uncertainty, risk-aversion, and mean-variance analysis. Natural Resource Modeling, 23(3), 303-323 (2010).

\bibitem{getz}
Getz, W.M., Production models for nonlinear stochastic age-structured fisheries. Mathematical Biosciences 69 (1), 11-30 (1984).

\bibitem{getzhaight}
Getz, W.M., Haight, R.G., Population harvesting: demographic models of fish, forest, and animal resources. Princeton University Press (1989).

\bibitem{getzfrancis}
Getz, W.M., Francis, R.C., Swartzman, G.L., On managing variable marine fisheries. Can. J. Fish. Aquat. Sci. 44: 1370-1375 (1987).

\bibitem{guerrero}
Guerrero, A., Arana, P., Fishing yields and size structures of Patagonian toothfish (Dissostichus eleginoides) caught with pots and longlines off far southern Chile. Latin Am. J. Aqua. Res. 37, 361-370 (2009).

\bibitem{hightower}
Hightower, J.E., Grossman G.D., Comparison of constant effort harvest policies for fish stocks with variable recruitment. Can. J. Fish. Aquat. Sci. 42: 982-988 (1985).

\bibitem{horwood}
Horwood, J. W., Shepherd, J. G., The sensitivity of age-structured populations to environmental variability. Mathematical Biosciences, 57(1-2), 59-82 (1981).

\bibitem{kallenberg}
Kallenberg, O., Foundations of modern probability. Springer-Verlag, New York (1997).

\bibitem{laptikhovsky}
Laptikhovsky, V., Brickle, P., The Patagonian toothfish fishery in Falkland Islands waters. Fish. Res. 74, 11-23 (2005).

\bibitem{leslie}
Leslie, P.H., On the use of matrices in certain population mathematics. Biometrika 33: 183-212 (1945).

\bibitem{moreno}
Moreno, C., Hook selectivity in the longline fishery of Dissostichus eleginoides (Notothenidae) off the Chilean coast. CCAMLR Sel. Sci. Pap. 8, 107-119 (1991).

\bibitem{NPFMC}
North Pacific Fishery Management Council, Fishery management plan for groundfish of the Gulf of Alaska (2016).

\bibitem{quinncollie}
Quinn, T.J., Collie, J.S., Sustainability in single-species population models. Philosophical Transactions of the Royal Society of London B: Biological Sciences, 360(1453), 147-162 (2005).

\bibitem{quinn}
Quinn, T.J., Deriso, R.B., Quantitative Fish Dynamics. {\emph{Biological Resource Management}}. Oxford University Press (1999).

\bibitem{reed}
Reed, W.J., Optimum age-specific harvesting in a nonlinear population model. Biometrics 36 (4): 579-593 (1980).

\bibitem{reed2}
Reed, W.J., Recruitment variability and age structure in harvested animal populations. Mathematical Biosciences 65 (2): 239-268 (1983).

\bibitem{noise}
Shepherd, J.L., Horwood, J.W., The sensitivity of exploited populations to environmental ``noise'', and the implications for management. Journal du Conseil 38 (3): 318-323 (1979).

\bibitem{tahvonen}
Tahvonen, O. Optimal Harvesting of Age-structured Fish Populations. Marine Resource Economics, 24(2), 147-169 (2009). 

\bibitem{bacalao}
Tascheri, R., Canales, C., Informe de Estatus Convenio de desempe\~no 2015: Estatus y posibilidades de explotacion biol\'ogicamente sustentables de los principales recursos pesqueros nacionales al a\~no 2016: Bacalao de profundidad, 2016. Subsecretar\'ia de econom\'ia y EMT (2015).

\bibitem{thompson}
Thompson, G.G., Optimizing Harvest Control Rules In the Presence of Natural Variability and Parameter Uncertainty, Proceedings. 5th NMFS NSAW. NOAA Tech. Memo. NMFS-F/SPO-40 (1999).

\bibitem{wohler}
W\"ohler, O. C. La Pesquer\'ia de Merluza Negra en el Atl\'antico Sudoccidental. Aspectos de su evoluci\'on hist\'orica y el esquema de manejo actual. Presentaci\'on entregada en el Taller
de trabajo Cient\'ifico Chileno-Argentino "Estructura Espacial del Stock Sudamericano de
Bacalao (Dissostichus eleginoides), Facultad de Ciencias, Universidad Austral de Chile,
Campus Isla Teja, Valdivia 24-26 Septiembre de 2013, Chile (2013).



\end{thebibliography}
\end{document}